\theoremstyle{remark}
\newtheorem{rmk}{Remark}
\newenvironment{ex}{\textbf{Example}}{}
\lstdefinelanguage{affprob}
{
	morekeywords={angel,demon, choice, prob(0.6), prob(0.5), if, then, else, fi,
		while, do, od,
		true, false, and, or, skip, sample},
	sensitive = false
}
\tikzstyle{ang}=[regular polygon, regular polygon sides = 3,draw,inner sep=0pt,minimum size=6mm, yshift = -0.75 mm]
\tikzstyle{dem}=[shape=diamond,draw,inner sep=0pt,minimum size=6mm]
\tikzstyle{ran}=[shape=circle,draw,inner sep=0pt,minimum size=6mm]
\tikzstyle{det}=[shape=rectangle,draw,inner sep=0pt,minimum size=5mm]
\tikzstyle{tran}=[draw,->,>=stealth, rounded corners]
\newcommand{\vars}{\mathcal{V}}
\newcommand{\CFG}{\mathcal{C}}
\newcommand{\trsys}{\mathcal{T}}
\newcommand{\locinit}{\loc_{\mathit{init}}}
\newcommand{\updates}{\mathit{Up}}
\newcommand{\locterm}{\loc_\lout}
\newcommand{\locs}{\mathit{L}}
\newcommand{\loc}{\ell}
\newcommand{\lout}{\mathit{out}}
\newcommand{\transitions}{\mapsto}
\newcommand{\guards}{G}
\newcommand{\conf}{c}
\newcommand{\BI}{\mathit{BI}}
\newcommand{\NA}{\mathit{NA}}
\newcommand{\Rev}{\text{Rev}}
\newcommand{\TermComp}{\mathsf{TermComp`19}}
\newcommand{\LoAT}{\mathsf{LoAT}}
\newcommand{\AProVE}{\mathsf{AProVE}}
\newcommand{\Ultimate}{\mathsf{Ultimate}}
\newcommand{\VeryMax}{\mathsf{VeryMax}}
\newcommand{\StarExec}{\mathsf{StarExec}}
\newcommand{\RevTerm}{\mathsf{RevTerm}}
\keywords{Static Analysis, Program Termination, Backward Analysis, Invariant Generation, Completeness Guarantees}
\begin{document}

\title{Proving Non-termination by Program Reversal}

\author{Krishnendu Chatterjee}
\affiliation{IST Austria, Klosterneuburg, Austria}
\email{krishnendu.chatterjee@ist.ac.at}

\author{Ehsan Kafshdar Goharshady}
\affiliation{Ferdowsi University of Mashhad, Mashhad, Iran}
\email{e.goharshady1@gmail.com}

\author{Petr Novotn\'y}
\affiliation{Masaryk University, Brno, Czech Republic}
\email{petr.novotny@fi.muni.cz}

\author{\DJ or\dj e \v{Z}ikeli\'c}
\affiliation{IST Austria, Klosterneuburg, Austria}
\email{dzikelic@ist.ac.at}

\begin{abstract}

We present a new approach to proving non-termination of non-deterministic integer programs. Our technique is rather simple but efficient. It relies on a purely syntactic reversal of the program's transition system followed by a constraint-based invariant synthesis with constraints coming from both the original and the reversed transition system. The latter task is performed by a simple call to an off-the-shelf SMT-solver, which allows us to leverage the latest advances in SMT-solving. Moreover, our method offers a combination of features not present (as a whole) in previous approaches: it handles programs with non-determinism, provides relative completeness guarantees and supports programs with polynomial arithmetic. The experiments performed with our prototype tool $\RevTerm$ show that our approach, despite its simplicity and stronger theoretical guarantees, is at least on par with the state-of-the-art tools, often achieving a non-trivial improvement under a proper configuration of its parameters.
\end{abstract}

\maketitle

\section{Introduction}

{\em Program analysis.} 
There are two relevant directions in program analysis: to prove 
program correctness and to find bugs.
While a correctness proof is obtained once, the procedure of bug finding 
is more relevant during software development and is repeatedly
applied, even for incomplete or partial programs.
In terms of specifications, the most basic properties in program analysis 
are safety and liveness.

\smallskip\noindent{\em Program analysis for safety and termination.}
The analysis of programs with respect to safety properties has received
a lot of attention~\citep{BallR02,HenzingerJMS02,GodefroidKS05,GulavaniHKNR06}, and for safety properties to report errors
the witnesses are finite traces violating the safety property.
The most basic liveness property is termination.
There is a huge body of work for proving correctness with respect to the termination property~\cite{FrancezGKP85,ColonS02,BradleyMS05,Cousot05},
e.g.~sound and complete methods based on ranking functions have been developed~\cite{ColonS01,PodelskiR04,PodelskiR04trin},
and efficient computational approaches based on lexicographic ranking functions have also
been considered~\cite{BradleyMS05,CookSZ13,BrockschmidtCF13}.

\smallskip\noindent{\em Proving non-termination.}
The bug finding problem for the termination property, or proving non-termination, is a 
challenging problem.
Conceptually, while for a safety property the violating witness is a finite trace,
for a termination property the violating witnesses are infinite traces.
There are several approaches for proving non-termination; here we discuss some key ones, which are most related in spirit to our new method (for a detailed discussion of  related work, see Section~\ref{sec:relatedwork}). For the purpose of this overview, we (rather broadly and with a certain grain of salt) classify the approa\-ches into two categories: \emph{trace-based} approaches, which look for a non-terminating trace (e.g.~\cite{GuptaHMRX08,LeikeH18,FrohnG19}), and \emph{set-based app\-roa\-ches,} which look for a set of non-terminal program configurations (states) in which the program can stay indefinitely (e.g.~\cite{ChenCFNO14,LarrazNORR14,GieslABEFFHOPSS17}). 
For instance, the work of~\cite{GuptaHMRX08} considers computing "lassos" (where a lasso is a finite prefix
followed by a finite cycle infinitely repeated) as counter-examples for 
termination and presents a trace-based app\-roach based on lassos to prove non-termination 
of deterministic programs.
In general, finite lassos are not sufficient to witness non-termination.
While lassos are periodic, proving non-termination for programs with aperiodic infinite traces 
via set-based methods has
 been considered in~\cite{ChenCFNO14,LarrazNORR14} for programs with non-determinism.
In~\cite{ChenCFNO14}, a method is proposed where "closed recurrence sets" of configurations are used to 
prove non-termination. Intuitively, a closed recurrence set must contain some initial configuration, must contain no terminal configurations, and cannot be escaped once entered. In \cite{ChenCFNO14}, closed recurrence sets are defined with respect to under-approximations of the transition 
relation, and an under-approximation search guided by several calls to a safe\-ty prover is used to compute a closed recurrence set.
In~\cite{LarrazNORR14}, a constraint solving-based method is proposed to search for ''quasi-invariants'' (sets of configurations which cannot be left once entered)
exhaustively in all strongly-connected subgraphs. A safety prover is used to check 
reachability for every obtained quasi-invariant.
For constraint solving, Max-SMT is used in~\cite{LarrazNORR14}.

\smallskip\noindent{\em Limitations of previous approaches.}
While the previous works represent significant advancement for proving 
non-termina\-tion, each of them has, to our best knowledge, at least one of the following limitations:
\begin{compactenum}[a)]
\item They do not support non-determinism, e.g.~\cite{VelroyenR08,GulwaniSV08}.
\item They only work for lassos (i.e. periodic non-terminating traces), e.g.~\cite{GuptaHMRX08}.
\item \emph{Theoretical limitation} of not providing any \emph{(relative) completeness guarantees.} Clearly, a non-termination proving algorithm cannot be both sound and complete, since non-termination is well-known to be undecidable.
However, as in the case of termination proving, it can be beneficial to provide relative completeness guarantees, i.e. conditions on the input program under which the algorithm is guaranteed to prove non-termination. To our best knowledge, the only approaches with such guarantees are \cite{LeikeH18,GulwaniSV08}; however, both of them only provide guarantees for a certain class of \emph{deterministic} programs.
\item Most of the previous approaches do not support programs with polynomial arithmetic (with an exception of \cite{CookFNO14,FrohnG19}).
\end{compactenum}

\smallskip\noindent{\em Our contributions.}
In this work we propose a new set-based approach to non-termination proving in integer programs. Intuitively, it searches for a diverging program configuration, i.e.~a configuration that is reachable but from which no program run is terminating (after resolving non-determinism using symbolic polynomial assignments). Our approach is based on a simple technique of {\em program reversal}, which reverses each transition in the program's transition system to produce the reversed transition system. The key property of this construction is that, given a program configuration, there is a terminating run starting in it if and only if it is reachable from the terminal location in the reversed transition system. This allows over-approximating the set of all program configurations from which termination can be reached by computing an invariant in the program's reversed transition system. We refer to the invariants in reversed transition systems as {\em backward invariants}. To generate the backward invariant, we may employ state-of-the-art polynomial invariant generation techniques to the reversed transition system as a single-shot procedure which is the main practical benefit of the program reversal. Our method proves non-termination by generating a backward invariant whose complement is reachable.
Hence, our new method adapts the classical and well-studied techniques for inductive invariant generation in order to find non-termination proofs by combining forward and backward analysis of a program. While such a combined analysis is common in safety analysis where the goal is to show that no program run reaches some annotated set of configurations~\cite{Bourdoncle93}, to our best knowledge it has never been considered for proving non-termination in programs with non-determinism, where we need to find a single program run that does not terminate. The key features of our method are as follows:

\begin{compactenum}[a)]
\item Our approach supports programs with non-determinism.
\item Our approach is also applicable to programs where all non-terminating traces are aperiodic.
\item {\em Relative completeness guarantee:} The work of~\cite{ChenCFNO14} establishes that 
closed recurrence sets and under-approxima\-tions are a sound and complete \emph{certificate} of non-termina\-tion, yet the algorithm based on these certificates does not in itself
provide any relative completeness guarantee (in the above sense). 
For our approach we show the following:
If there is an under-approximation of the transition relation where non-determinism can be resolved by polynomial assignments such that the resolved program contains a closed recurrence set representable as a propositional predicate map, 
then our approach is guaranteed to prove non-termination.
We obtain such guarantee by employing relatively complete methods for inductive invariant synthesis, which is another key advantage of adapting invariant generation techniques to non-termination proving.
Moreover, we provide even stronger relative completeness guarantees for programs in which non-determi\-nism appears only in branching (but not in variable assignments).
\item Our approach supports programs with polynomial arithmetic.
\end{compactenum}
We developed a prototype tool $\RevTerm$ which implements our approach. We experimentally compared our tool with state-of-the-art non-termination provers on standard benchmarks from the Termination and Complexity Competition ($\TermComp$~\citep{GieslRSWY19}).
Our tool demonstrates performance on par with the most efficient of the competing provers, while providing additional guarantees. 
In particular, with a proper configuration, our tool achieved the largest number of benchmarks proved non-terminating.


\smallskip\noindent{\em Outline.}
After presenting the necessary definitions (Section~\ref{sec:prelims}), we present our approach and its novel aspects in the following order: first we introduce the technique of program reversal (Section~\ref{sec:reverse}); then we present a new certificate for non-termination (Section~\ref{sec:certificate}) based on so-called \emph{backward invariants,} 
as well as an invariant generation-based automated approach for this certificate 
(Section~\ref{sec:nonterm}); finally
we prove relative completeness guarantees (Section~\ref{sec:completeness}). We conclude with the presentation of our experiments and discussion of related work.

 

\section{Preliminaries}\label{sec:prelims}

\noindent{\em Syntax of programs.} In this work we consider simple imperative arithmetic programs with polynomial integer arithmetic and with non-determinism. They consist of standard programming constructs such as conditional branching, while-loops and (deterministic) variable assignments. In addition, we allow constructs for non-deterministic assignments of the form $x:=\textbf{ndet}()$, which assign any integral value to $x$. The adjective \textit{polynomial} refers to the fact that all arithmetic expressions are polynomials in program variables.

\begin{example}[Running example]\label{ex:running}
	Fig.~\ref{fig:running} left shows a program which will serve as our running example. The second line contains a non-deterministic assignment, in which any integral value can be assigned to the variable $x$.
\end{example}

\noindent{\em Removing non-deterministic branching.}
We may \textit{without loss of generality} assume that non-determinism does not appear in branching: for the purpose of termination analysis, one can replace each non-deterministic branching with a non-deterministic assignment. Indeed, non-deterministic branching in programs is given by a command $\textbf{if } \ast \textbf{ then}$, meaning that the control-flow can follow any of the two subsequent branches. By introducing an auxiliary program variable $x_{\textbf{ndet}}$ and replacing each command $\textbf{if } \ast \textbf{ then}$ with two commands
\begin{equation*}
\begin{split}
&x_{\textbf{ndet}}:=\textbf{ndet}()\\
&\textbf{if } x_{\textbf{ndet}}\geq 0 \textbf{ then}
\end{split}
\end{equation*}
we obtain a program which terminates on every input if and only if the original program does. This removal is done for the sake of easier presentation and neater definition of the \textit{resolution of non-determinism}, see Section~\ref{sec:resnondet}.

\lstset{language=affprob}
\lstset{tabsize=2}
\newsavebox{\exapp}
\begin{lrbox}{\exapp}
	\centering
	\begin{lstlisting}[mathescape]
	$l_0$:	while $x\geq 9$ do
	$l_1$:		$x := \textbf{ndet}()$
	$l_2$:		$y := 10\cdot x$
	$l_3$:		while $x\leq y$ do
	$l_4$:			$x := x+1$
	    od
	   od
	\end{lstlisting}
\end{lrbox}

\smallskip\noindent{\em Predicate, assertion, propositional predicate.} We use the following terminology:
\begin{compactitem}
	\item {\em Predicate}, which is a set of program variable valuations.
	\item {\em Assertion}, which is a finite conjunction of polynomial inequalities over program variables. We need not differentiate between non-strict and strict inequalities since we work over integer arithmetic.
	\item {\em Propositional predicate (PP)}, which is a finite disjunction of assertions.
\end{compactitem}
We write $\mathbf{x}\models \phi$ to denote that the predicate $\phi$ given by a formula over program variables is satisfied by substituting values in $\mathbf{x}$ for corresponding variables in $\phi$. For a predicate $\phi$, we define $\neg \phi=\mathbb{Z}^{|\vars|}\backslash\phi$.

\smallskip\noindent{\em Transition system.} We model programs using transition systems \cite{ColonSS03}.

\begin{definition}[Transition system]\label{def:cfg}
	A \em{transition system} is a tuple $\trsys=(\locs,\vars,\locinit,\Theta_{init},\transitions)$, where
		 $\locs$ is a finite set of {\em locations};
		$\vars$ is a finite set of {\em program variables};
		$\locinit$ is the {\em initial location};
		$\Theta_{init}$ is the set of {\em initial variable valuations}; and
		 $\transitions\, \subseteq \locs\times \locs\times \mathcal{P}(\mathbb{Z}^{|\vars|}\times \mathbb{Z}^{|\vars|})$ is a finite set of {\em transitions}. Each transition is defined as an ordered triple $\tau=(l,l',\rho_{\tau})$, with $l$ its {\em source} and $l'$ the {\em target location}, and the {\em transition relation} $\rho_{\tau}\subseteq \mathbb{Z}^{|\vars|}\times \mathbb{Z}^{|\vars|}$. The transition relation is usually given by an assertion over $\vars$ and $\vars'$, where $\vars$ represents the source-state variables and $\vars'$ the target-state variables.
\end{definition}

\noindent Each program $P$ naturally defines a transition system $\trsys$, with each transition relation given by an assertion over program variables. Its construction is standard and we omit it. The only difference is that here $\locinit$ will correspond to the first non-assignment command in the program code, whereas the sequence of assignments preceding $\locinit$ specifies $\Theta_{init}$ (unspecified variables may take any value). Hence $\Theta_{init}$ will also be an assertion. For transition systems derived from programs, we assume the existence of a special {\em terminal location} $\locterm$, which represents a ''final'' line of the program code. It has a single outgoing transition which is a self-loop with a transition relation $\rho=\{(\mathbf{x},\mathbf{x})\mid \mathbf{x}\in\mathbb{Z}^{|\vars|}\}$.

\noindent A \textit{configuration} (or \textit{state}) of a transition system $\trsys$ is an ordered pair $(l,\mathbf{x})$ where $l$ is a location and $\mathbf{x}$ is a vector of variable valuations. A configuration $(l',\mathbf{x}')$ is a \textit{successor} of a configuration $(l,\mathbf{x})$ if there is a transition $\tau=(l,l',\rho_{\tau})$ with $(\mathbf{x},\mathbf{x}')\in \rho_{\tau}$. The self-loop at $\locterm$ allows us to without loss of generality assume that each configuration has at least one successor in a transition system $\trsys$ derived from a program. Given a configuration $\mathbf{c}$, a \textit{finite path from $\mathbf{c}$} in $\trsys$ is a finite sequence of configurations $\mathbf{c}=(l_0,\mathbf{x}_0),\dots,(l_k,\mathbf{x}_k)$ where for each $0\leq i< k$ we have that $(l_{i+1},\mathbf{x}_{i+1})$ is a successor of $(l_i,\mathbf{x}_i)$. A \textit{run} (or \textit{execution}) \textit{from $\mathbf{c}$} in $\trsys$ is an infinite sequence of configurations whose every finite prefix is a finite path from $\mathbf{c}$. A configuration is said to be {\em initial} if it belongs to the set $\{(\locinit,\mathbf{x})\mid \mathbf{x}\models \Theta_{init} \}$. A configuration $(l,\mathbf{x})$ is \textit{reachable from $\mathbf{c}$} if there is a finite path from $\mathbf{c}$ with the last configuration $(l,\mathbf{x})$. When we omit specifying the configuration $\mathbf{c}$, we refer to a finite path, execution and reachability from some initial configuration. A configuration $(l,\mathbf{x})$ is said to be {\em terminal} if $l=\locterm$.

\begin{example}\label{ex:transitionsystem}
	The transition system for our running example is presented in Fig.~\ref{fig:running} center. It contains 6 locations $\locs=\{l_0,l_1,l_2,l_3,l_4,\locterm\}$ with $\locinit=l_0$, and two program variables $\vars=\{x,y\}$. Since there are no assignments preceding the initial program location, we have $\Theta_{init}=\mathbb{Z}^2$. Locations are depicted by labeled circles, transitions by directed arrows between program locations and their transition relations are given in the associated rectangular boxes.
\end{example}

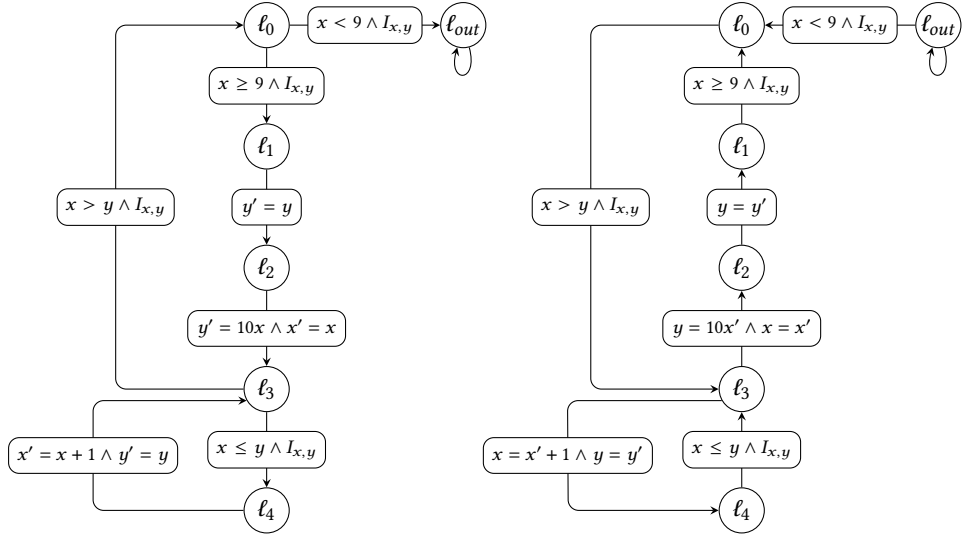
\begin{figure*}[t]
	\centering
	\begin{subfigure}{0.25\textwidth}
		\centering
		\usebox{\exapp}
	\end{subfigure}
	\begin{subfigure}{0.35\textwidth}
		\begin{tikzpicture}
		\node[ran] (while) at (0,0)  {$\loc_0$};
		\node[ran, right = 2cm of while] (term) {$ \locterm $};
		\node[ran, below = 1 cm of while] (assignx1) {$ \loc_1 $};
		\node[ran, below = 1 cm of assignx1] (assigny) {$ \loc_2 $};
		\node[ran, below = 1 cm of assigny] (while2) {$ \loc_3 $};
		\node[ran, below = 1 cm of while2] (assignx2) {$ \loc_4 $};
		
		\draw[tran] (while) to node[font=\scriptsize,draw, fill=white, 
		rectangle,pos=0.5] {$x<9\land I_{x,y}$} (term);
		\draw[tran] (while) to node[font=\scriptsize,draw, fill=white, 
		rectangle,pos=0.5] {$x\geq 9\land I_{x,y}$} (assignx1);
		\draw[tran] (assignx1) to  node[font=\scriptsize,draw, fill=white, 
		rectangle,pos=0.5] {$y'=y$} (assigny);
		\draw[tran] (assigny) to  node[font=\scriptsize,draw, fill=white, 
		rectangle,pos=0.5] {$y' = 10x\land x'=x$} (while2);
		\draw[tran] (while2) to node[font=\scriptsize,draw, fill=white, 
		rectangle,pos=0.5] {$x\leq y\land I_{x,y}$} (assignx2);
		\node[left = 2cm of assignx2, circle, minimum size = 3mm] (dum) {};
		\draw[tran, rounded corners] (assignx2) -- (dum.east) -- node[font=\scriptsize,draw, fill=white, 
		rectangle,pos=0.5] {$x'=x+1\land y'=y$} (dum.east|-while2.210) -- (while2.210);
		\node[left = 1.7cm of while2, circle, minimum size = 3mm] (dum2) {};
		\draw[tran, rounded corners] (while2) -- (dum2.east) -- node[font=\scriptsize,draw, fill=white, 
		rectangle,pos=0.5] {$x>y\land I_{x,y}$} (dum2.east|-while) -- (while);
		\draw[tran, loop below] (term) to  (term);
		\end{tikzpicture}
	\end{subfigure}
	\begin{subfigure}{0.35\textwidth}
		\begin{tikzpicture}
		\node[ran] (while) at (0,0)  {$\loc_0$};
		\node[ran, right = 2cm of while] (term) {$ \locterm $};
		\node[ran, below = 1 cm of while] (assignx1) {$ \loc_1 $};
		\node[ran, below = 1 cm of assignx1] (assigny) {$ \loc_2 $};
		\node[ran, below = 1 cm of assigny] (while2) {$ \loc_3 $};
		\node[ran, below = 1 cm of while2] (assignx2) {$ \loc_4 $};
		
		\draw[tran] (term) to node[font=\scriptsize,draw, fill=white, 
		rectangle,pos=0.5] {$x<9\land I_{x,y}$} (while);
		\draw[tran] (assignx1) to node[font=\scriptsize,draw, fill=white, 
		rectangle,pos=0.5] {$x\geq 9\land I_{x,y}$} (while);
		\draw[tran] (assigny) to node[font=\scriptsize,draw, fill=white, 
		rectangle,pos=0.5] {$y=y'$} (assignx1);
		\draw[tran] (while2) to node[font=\scriptsize,draw, fill=white, 
		rectangle,pos=0.5] {$y=10x'\land x=x'$} (assigny);
		\draw[tran] (assignx2) to node[font=\scriptsize,draw, fill=white, 
		rectangle,pos=0.5] {$x\leq y\land I_{x,y}$} (while2);
		\node[left = 2.05cm of while2.210, circle, minimum size = 3mm] (dum) {};
		\node[left = 2cm of assignx2, circle, minimum size = 3mm] (dum3) {};
		\draw[tran, rounded corners] (while2.210) -- (dum.east) -- node[font=\scriptsize,draw, fill=white, 
		rectangle,pos=0.5] {$x=x'+1\land y=y'$} (dum3.east|-assignx2) -- (assignx2);
		\node[left = 1.7cm of while, circle, minimum size = 3mm] (dum2) {};
		\draw[tran, rounded corners] (while) -- (dum2.east) -- node[font=\scriptsize,draw, fill=white, 
		rectangle,pos=0.5] {$x>y\land I_{x,y}$} (dum2.east|-while2) -- (while2);
		\draw[tran, loop below] (term) to  (term);
		\end{tikzpicture}
	\end{subfigure}
	\caption{Running example, its associated transition system, and its reversed transition system. $I_{x,y}$ denotes $x'=x\land y'=y$ and is used for readability.}
	\label{fig:running}
\end{figure*}

\noindent{\em Invariants and inductive predicate maps.} Given a transition system $\trsys$, a \textit{ predicate map} is a map $I$ assigning to each location in $\trsys$ a predicate over the program variables. A predicate map naturally defines a set of configurations in $\trsys$ and we will freely interchange between the two notions. A predicate map is \textit{of type-$(c,d)$} if it assigns to each program location a propositional predicate which is a disjunction of $d$ assertions, each being a conjunction of $c$ polynomial inequalities. For a predicate map $I$, we define the complement predicate map $\neg I$ as $(\neg I)(l)=\neg I(l)$ for each location $l$. \\
A predicate map $I$ is said to be an \textit{invariant} if for every reachable configuration $(l,\mathbf{x})$ in $\trsys$, we have $\mathbf{x}\models I(l)$. Intuitively, invariants are over-approximations of the set of reachable configurations in the transition system. A predicate map is \textit{inductive} if it is inductive with respect to every transition $\tau=(l,l',\rho_{\tau})$, i.e.~if for any pair of configurations $(l,\mathbf{x})$ and $(l',\mathbf{x}')$ with $\mathbf{x}\models I(l)$ and $(\mathbf{x},\mathbf{x}')\in \rho_{\tau}$, we also have $\mathbf{x}'\models I(l')$.

\smallskip\noindent{\em Termination problem.} Given a program and its transition system $\trsys$, we say that a run reaching $\locterm$ is \textit{terminating}. The program is said to be \textit{terminating} if every run in $\trsys$ is terminating. Otherwise it is said to be \textit{non-terminating}. One witness to non-termination can be a configuration that is reachable but from which there are no terminating executions. We call such configuration \textit{diverging}.

\begin{example}\label{ex:nonterm}
Consider again the running example in Fig.~\ref{fig:running}. For any initial configuration with $x\geq 9$, executions that always assign $x:=9$ when passing the non-deterministic assignment are non-terminating. On the other hand, the execution that assigns $x:=0$ in the non-deterministic assignment enters the outer loop only once and then terminates. Thus, no initial configuration is diverging. One can similarly check that other configurations are also not diverging.
\end{example}

\section{Transition system reversal}\label{sec:reversalcertificate}\label{sec:reverse}

We now show that it is possible to ''reverse'' a transition system by reversing each of its transitions. This construction is the core concept of our approach to proving non-termination, since configurations in the program from which $\locterm$ is reachable will be precisely those configurations which can be reached from $\locterm$ in the reversed transition system. We then present a sound and complete certificate for non-termination based on this construction. 

\begin{definition}[Reversed transition system]\label{def:reverse}
	Given a transition system $\trsys=(L,\vars,\locinit,\Theta_{init},\transitions)$ and a transition $\tau=(l,l',\rho_{\tau})\in\, \transitions$, let
	\begin{equation*}
	\rho'_{\tau}=\{(\mathbf{x}',\mathbf{x})\mid (\mathbf{x},\mathbf{x}')\in\rho_{\tau}\}.
	\end{equation*}
	If $\rho_{\tau}$ is given by an assertion over $\vars\cup\vars'$, $\rho'_{\tau}$ is obtained from $\rho_{\tau}$ by replacing each unprimed variable in the defining assertion for $\rho_{\tau}$ with its primed counterpart, and vice-versa. Then for an assertion $\Theta$, we define the {\em reversed transition system} of $\trsys$ with initial variable valuations $\Theta$ as a tuple $\trsys^{r,\Theta}=(L,\vars,\locterm,\Theta,\transitions^r)$, where $\transitions^r=\{(l',l,\rho'_{\tau})\mid (l,l',\rho_{\tau})\in\,\transitions\}$.
\end{definition}

\noindent Note that this construction satisfies Definition~\ref{def:cfg} and thus yields another transition system. All notions that were defined before (e.g.~configuration, finite path, etc.) are defined analogously for the reversed transition systems.

\begin{example}
	Fig.~\ref{fig:running} right shows the reversed transition system $\trsys^{r,\Theta}$ of the program in Fig.~\ref{fig:running}. Note that for every transition $\tau$ in $\trsys$ for which $\rho_{\tau}$ is given by a conjunction of an assertion over unprimed program variables and $x'=x\land y'=y$, after reversing we obtain the conjunction of the same assertion just now over primed variables and $x'=x\land y'=y$. Hence, for such $\tau$ the transition relation is invariant under reversing. For example, a transition from $l_0$ to $l_1$ in $\trsys$ has transition relation $x\geq 9 \land x'=x\land y'=y$ so the reversed transition has transition relation $x'\geq 9\land x=x'\land y=y'$. As $x'=x$, this is the same relation as prior to reversal.
\end{example}

\noindent The following lemma is the key property of this construction.

\begin{lemma}[Key property of reversed transition systems]\label{lemma:key}
	Let $\trsys$ be a transition system, $\Theta$ an assertion and $\trsys^{r,\Theta}$ the reversed transition system of $\trsys$ with initial variable valuations $\Theta$. Let $\mathbf{c}$ and $\mathbf{c}'$ be two configurations. Then $\mathbf{c}'$ is reachable from $\mathbf{c}$ in $\trsys$ if and only if $\mathbf{c}$ is reachable from $\mathbf{c}'$ in $\trsys^{r,\Theta}$.
\end{lemma}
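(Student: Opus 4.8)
The plan is to prove both directions of the equivalence by induction on the length of the path, but in fact the cleanest route is to reduce the whole statement to a single-step claim and then lift it to paths. First I would establish the base case at the level of individual transitions: for a transition $\tau=(l,l',\rho_\tau)\in\,\transitions$, the reversed system contains $\tau^r=(l',l,\rho'_\tau)$ with $\rho'_\tau=\{(\mathbf{x}',\mathbf{x})\mid(\mathbf{x},\mathbf{x}')\in\rho_\tau\}$ by Definition~\ref{def:reverse}. Hence $(l',\mathbf{x}')$ is a successor of $(l,\mathbf{x})$ in $\trsys$ if and only if there is a transition $\tau=(l,l',\rho_\tau)$ with $(\mathbf{x},\mathbf{x}')\in\rho_\tau$, which holds exactly when $(\mathbf{x}',\mathbf{x})\in\rho'_\tau$, i.e.~exactly when $(l,\mathbf{x})$ is a successor of $(l',\mathbf{x}')$ in $\trsys^{r,\Theta}$. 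The crucial observation making this clean is that $\transitions^r$ is defined by reversing \emph{each} transition, so the successor relations of $\trsys$ and $\trsys^{r,\Theta}$ are exact converses of one another, configuration-wise; note also that neither $\locinit$, $\Theta_{init}$ nor the choice of $\Theta$ plays any role in the successor/reachability relation, so the dependence on $\Theta$ in $\trsys^{r,\Theta}$ is immaterial here.

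Next I would lift this to finite paths, which is where the actual content lies. I would argue that a finite path $\mathbf{c}=(l_0,\mathbf{x}_0),\dots,(l_k,\mathbf{x}_k)=\mathbf{c}'$ in $\trsys$ exists if and only if the reversed sequence $(l_k,\mathbf{x}_k),\dots,(l_0,\mathbf{x}_0)$ is a finite path in $\trsys^{r,\Theta}$. This follows by applying the single-step equivalence to each consecutive pair: for each $0\le i<k$, $(l_{i+1},\mathbf{x}_{i+1})$ is a successor of $(l_i,\mathbf{x}_i)$ in $\trsys$ iff $(l_i,\mathbf{x}_i)$ is a successor of $(l_{i+1},\mathbf{x}_{i+1})$ in $\trsys^{r,\Theta}$, and the reversed sequence visits exactly these pairs in the opposite order. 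A short induction on $k$ formalizes this: the empty/length-zero path is trivially reversible, and the inductive step appends (respectively prepends) one transition and invokes the base case. By definition a configuration $\mathbf{c}'$ is reachable from $\mathbf{c}$ in $\trsys$ precisely when such a finite path from $\mathbf{c}$ ending in $\mathbf{c}'$ exists, so the reversibility of finite paths immediately yields that $\mathbf{c}'$ is reachable from $\mathbf{c}$ in $\trsys$ iff $\mathbf{c}$ is reachable from $\mathbf{c}'$ in $\trsys^{r,\Theta}$.

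I expect no genuinely hard step here; the statement is essentially a bookkeeping lemma, and the main thing to get right is to be careful that the \emph{order} of configurations is reversed along the path while the \emph{direction} of each successor check is simultaneously flipped, so that the two reversals are consistent. The one point deserving explicit care is the self-loop at $\locterm$ and the standing assumption that every configuration has a successor: I would remark that reversal preserves this well-formedness (the self-loop $\{(\mathbf{x},\mathbf{x})\}$ is its own reverse), so $\trsys^{r,\Theta}$ is again a bona fide transition system to which the path/reachability definitions apply, a fact already noted after Definition~\ref{def:reverse}. With the single-step converse established, everything else is a routine induction and the equivalence follows.
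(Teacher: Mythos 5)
Your proof is correct and follows essentially the same route as the paper's: reverse the given finite path configuration by configuration, using the fact that each transition $\tau=(l,l',\rho_\tau)$ yields $\tau^r=(l',l,\rho'_\tau)$ in $\trsys^{r,\Theta}$, and conclude via the definition of reachability. The only cosmetic difference is that you package the argument as a symmetric single-step ``iff'' plus induction, whereas the paper proves one direction explicitly and notes the other is analogous; the mathematical content is identical.
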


\begin{proof}
	We prove that if $\mathbf{c}'$ is reachable from $\mathbf{c}$ in $\trsys$ then $\mathbf{c}$ is reachable from $\mathbf{c}'$ in $\trsys^{r,\Theta}$, the other direction follows analogously.
	Suppose that $\mathbf{c}=(l_0,\mathbf{x}_0),(l_1,\mathbf{x}_1),\dots,(l_k,\mathbf{x}_k)=\mathbf{c}'$ is a path from $\mathbf{c}$ to $\mathbf{c}'$ in $\trsys$. Then for each $0\leq i<k$ there is a transition $\tau_i=(l_i,l_{i+1},\rho_{\tau_i})$ in $\trsys$ for which $(\mathbf{x}_i,\mathbf{x}_{i+1})\in \rho_{\tau_i}$. But then $(\mathbf{x}_{i+1},\mathbf{x}_i)\in \rho'_{\tau_i}$ and $\tau_i^r=(l',l,\rho'_{\tau_i})$, hence $(l_i,\mathbf{x}_i)$ is a successor of $(l_{i+1},\mathbf{x}_{i+1})$ in $\trsys^{r,\Theta}$. Thus \[ \mathbf{c'}=(l_k,\mathbf{x}_k),(l_{k-1},\mathbf{x}_{k-1}),\dots,(l_0,\mathbf{x}_0)=\mathbf{c}\]
	is a finite path in $\trsys^{r,\Theta}$, proving the claim.
\end{proof}

\noindent{\em Backward Invariants.} Lemma~\ref{lemma:key} implies that generating invariants for the reversed transition system $\trsys^{r,\Theta}$ provides a way to over-approximate the set of configurations in $\trsys$ from which some configuration in the set $\{(\locterm,\mathbf{x})\mid \mathbf{x}\models\Theta\}$ is reachable. This motivates the notion of a \textit{backward invariant}, which will be important in what follows.

\begin{definition}[Backward invariant]\label{def:backward}
	For a transition system $\trsys$ and an assertion $\Theta$, we say that the predicate map $\BI$ is a {\em backward invariant in $\trsys^{r,\Theta}$} if it is an invariant in $\trsys^{r,\Theta}$. The word backward is used to emphasize that we are working in the reversed transition system.
\end{definition}

\noindent We conclude this section with a theorem illustrating the behavior of inductive predicate maps under program reversal.

\begin{theorem}\label{lemma:inductiverreverse}
	Let $\trsys$ be a transition system, $\Theta$ an assertion, $I$ a predicate map and $\trsys^{r,\Theta}$ the reversed transition system. Then $I$ is inductive in $\trsys$ if and only if $\neg I$ is inductive in $\trsys^{r,\Theta}$.
\end{theorem}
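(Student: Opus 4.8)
The plan is to exploit the fact that inductivity is a purely local, per-transition condition that makes no reference to the initial valuations, together with the bijection that Definition~\ref{def:reverse} sets up between transitions. Concretely, the map $\tau=(l,l',\rho_\tau)\mapsto\tau^r=(l',l,\rho'_\tau)$ is a bijection between the transition set $\transitions$ of $\trsys$ and the transition set $\transitions^r$ of $\trsys^{r,\Theta}$. Since a predicate map is inductive exactly when it is inductive with respect to \emph{every} transition, it suffices to prove, for each fixed $\tau$, the local equivalence: $I$ is inductive with respect to $\tau$ in $\trsys$ if and only if $\neg I$ is inductive with respect to $\tau^r$ in $\trsys^{r,\Theta}$. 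Quantifying this equivalence over all $\tau\in\transitions$ then gives the theorem, and because the local statement is itself an ``if and only if'', both directions are obtained at once; note also that $\Theta$ never enters the reasoning.

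For the local step I would write out both sides explicitly. Inductivity of $I$ with respect to $\tau=(l,l',\rho_\tau)$ says, for all $\mathbf{x},\mathbf{x}'\in\mathbb{Z}^{|\vars|}$,
\[
\mathbf{x}\models I(l)\ \wedge\ (\mathbf{x},\mathbf{x}')\in\rho_\tau\ \Longrightarrow\ \mathbf{x}'\models I(l').
\]
Inductivity of $\neg I$ with respect to $\tau^r=(l',l,\rho'_\tau)$, whose source is $l'$ and target is $l$, says, for all $\mathbf{u},\mathbf{v}$,
\[
\mathbf{u}\models \neg I(l')\ \wedge\ (\mathbf{u},\mathbf{v})\in\rho'_\tau\ \Longrightarrow\ \mathbf{v}\models \neg I(l).
\]
Using the defining identity $(\mathbf{u},\mathbf{v})\in\rho'_\tau \iff (\mathbf{v},\mathbf{u})\in\rho_\tau$ and the renaming $\mathbf{x}:=\mathbf{v}$, $\mathbf{x}':=\mathbf{u}$, the second condition becomes: for all $\mathbf{x},\mathbf{x}'$, if $\mathbf{x}'\models\neg I(l')$ and $(\mathbf{x},\mathbf{x}')\in\rho_\tau$, then $\mathbf{x}\models\neg I(l)$.

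The crux is to recognize this rewritten condition as the contrapositive of the first one. Since the complement is defined by $\neg I(m)=\mathbb{Z}^{|\vars|}\setminus I(m)$, we have $\mathbf{x}\models\neg I(m)\iff \mathbf{x}\not\models I(m)$; hence $\mathbf{x}'\models\neg I(l')$ is precisely $\neg(\mathbf{x}'\models I(l'))$ and $\mathbf{x}\models\neg I(l)$ is precisely $\neg(\mathbf{x}\models I(l))$. Keeping the membership $(\mathbf{x},\mathbf{x}')\in\rho_\tau$ as a side hypothesis, the implication $\bigl(\mathbf{x}\models I(l)\bigr)\Rightarrow\bigl(\mathbf{x}'\models I(l')\bigr)$ and the rewritten implication $\neg\bigl(\mathbf{x}'\models I(l')\bigr)\Rightarrow\neg\bigl(\mathbf{x}\models I(l)\bigr)$ are contrapositives, hence logically equivalent pointwise in $\mathbf{x},\mathbf{x}'$. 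This yields the desired local equivalence for each $\tau$.

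I expect no serious obstacle; the only points requiring care are bookkeeping ones. First, one must track that reversal swaps source and target while complementation negates the membership tests, so that the two operations together produce exactly a contrapositive rather than a mismatched statement — this is the single genuine insight of the proof. Second, one should observe that the terminal self-loop at $\locterm$ (whose relation is the identity) reverses to itself, and that every predicate map, including both $I$ and $\neg I$, is trivially inductive with respect to an identity self-loop, so this distinguished transition needs no special handling. Finally, since $\rho'_\tau$ is obtained by an involutive priming/unpriming, double reversal returns the original relations, which would let one derive the reverse direction by applying the forward direction to $\trsys^{r,\Theta}$ and $\neg I$; but the symmetric chain of equivalences above already subsumes both directions, so this alternative is unnecessary.
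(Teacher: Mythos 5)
Your proof is correct and takes essentially the same approach as the paper: both reduce to a per-transition argument in which the inductiveness condition for $\neg I$ on a reversed transition is exactly the contrapositive of the inductiveness condition for $I$ on the original transition. The only cosmetic difference is that you package this as a pointwise logical equivalence yielding both directions of the ``if and only if'' at once, whereas the paper proves one direction by contradiction and notes that the converse follows analogously.
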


\begin{proof}
	We show that $I$ being inductive in $\trsys$ implies that $\neg I$ is inductive in $\trsys^{r,\Theta}$. The other direction of the lemma follows analogously. \\
	Let $\tau^r=(l',l,\rho'_{\tau})$ be a transition in $\trsys^{r,\Theta}$ obtained by reversing $\tau=(l,l',\rho_{\tau})$ in $\trsys$. Assume that $\mathbf{x}'\in\neg I(l')$. To show inductiveness of $\neg I$ in the reversed transition system, we take a successor $(l,\mathbf{x})$ of $(l’,\mathbf{x}’)$ in the reversed transition system with $(\mathbf{x}',\mathbf{x})\in\rho'_{\tau}$, and we need to show that $\mathbf{x} \in \neg I(l)$. By definition of the reversed transition we have $(\mathbf{x},\mathbf{x}')\in\rho_{\tau}$. So, if on the contrary we had $\mathbf{x}\in I(l)$, inductiveness of $I$ in $\trsys$ would imply  that $\mathbf{x}'\in I(l')$. This would contradict the assumption that $\mathbf{x}'\in\neg I(l')$. Thus, we must have $\mathbf{x}\in \neg I(l)$, and $\neg I$ is inductive in $\trsys^{r,\Theta}$.
\end{proof}

\section{Sound and Complete Certificate for Non-termination}\label{sec:certificate}

Lemma~\ref{lemma:key} indicates that reversed transition systems are relevant for the termination problem, as they provide means to describe configurations from which the terminal location can be reached. We now introduce the \textit{$\BI$-certificate for non-termination}, based on the reversed transition systems and backward invariants. We show that it is both \textit{sound} and \textit{complete} for proving non-termination and hence characterizes it (i.e. a program is non-terminating if and only if it admits the certificate). This is done by establishing a connection to recurrence sets~\cite{GuptaHMRX08,ChenCFNO14}, a notion which provides a necessary and sufficient condition for a program to be non-terminating.

\smallskip\noindent{\em Recurrence set.} A \textit{recurrence set}~\cite{GuptaHMRX08} in a transition system $\trsys$ is a non-empty set of configurations $\mathcal{G}$ which (1)~contains some configuration reachable in $\trsys$, (2)~every configuration in $\mathcal{G}$ has at least one successor in $\mathcal{G}$, and (3)~contains no terminal configurations. The last condition was not present in~\cite{GuptaHMRX08} and we add it to account for the terminal location and the self-loop at it, but the definitions are easily seen to be equivalent. In~\cite{GuptaHMRX08}, it is shown that a program is non-terminating if and only if its transition system contains a recurrence set. The work in~\cite{ChenCFNO14} notes that one may without loss of generality restrict attention to recurrence sets which contain some initial configuration (which they call~\textit{open recurrence sets}). Indeed, to every recurrence set one can add configurations from some finite path reaching it to obtain an open recurrence set, and there is at least one such path since each recurrence set contains a reachable configuration.

\smallskip\noindent{\em Closed recurrence set.} A \textit{closed recurrence set}~\cite{ChenCFNO14} is an open recurrence set $\mathcal{C}$ with the additional property of being inductive, i.e.~for every configuration in $\mathcal{C}$ \textit{each} of its successors is also contained in $\mathcal{C}$. The work~\cite[Theorems 1 and 2]{ChenCFNO14} shows that closed recurrence sets can be used to define a sound and complete certificate for non-termination, which we describe next. Call $U=(L,\vars,\locinit,\Theta_{init},\transitions_U)$ an \textit{under-approximation} of $\trsys=(\locs,\vars,\locinit,\Theta_{init},\transitions)$ if for every transition $(l,l',\rho^u_{\tau})\in\,\transitions_U$ there exists $(l,l',\rho_{\tau})\in\,\transitions$ with $\rho^u_{\tau}\subseteq \rho_{\tau}$. Then $\trsys$ contains an open recurrence set if and only if there is an under-approximation $U$ of $\trsys$ and a closed recurrence set in $U$. 

\smallskip\noindent{\em Proper under-approximations.} We introduce a notion of proper under-approximation. An under-approximation $U$ of $\trsys$ is {\em proper} if every configuration which has a successor in $\trsys$ also has at least one successor in $U$. This is a new concept and restricts general under-approximations, but it will be relevant in defining the $\BI$-certificate for non-termination and establishing its soundness and completeness. The next lemma is technical and shows that closed recurrence sets in proper under-approximations are sound and complete for proving non-termination, its proof can be found in Appendix~\ref{app:certificate}.

\begin{lemma}\label{lemma:properun}
	Let $P$ be a non-terminating program and $\trsys$ its transition system. Then there exist a proper under-approxima\-tion $U$ of $\trsys$ and a closed recurrence set $\mathcal{C}$ in $U$.
\end{lemma}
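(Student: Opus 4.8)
The plan is to reduce to the already-cited certificate and then \emph{repair} its under-approximation. Since $P$ is non-terminating, the classical characterisation via recurrence sets~\cite{GuptaHMRX08} guarantees that $\trsys$ contains a recurrence set, and as recorded just above this set may be taken to be open and, by~\cite{ChenCFNO14}, to arise as a closed recurrence set $\mathcal{C}$ in some under-approximation $U_0$ of $\trsys$. I would fix such a pair $(U_0,\mathcal{C})$. In general $U_0$ need not be proper: some configuration may have a successor in $\trsys$ yet none in $U_0$. The entire task is therefore to enlarge $U_0$ into a proper under-approximation $U$ while keeping $\mathcal{C}$ a closed recurrence set.

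The central observation is that $\mathcal{C}$ is already \emph{self-sufficient} in $U_0$: every configuration of $\mathcal{C}$ has a successor inside $\mathcal{C}$ (recurrence condition (2)) and, by closedness, \emph{all} of its $U_0$-successors lie in $\mathcal{C}$. Hence I would add new behaviour only to configurations lying \emph{outside} $\mathcal{C}$. Writing $\mathcal{C}_l=\{\mathbf{x}\mid (l,\mathbf{x})\in\mathcal{C}\}$ for the valuations of $\mathcal{C}$ at location $l$, I would let $\transitions_U$ consist of all transitions of $U_0$ together with, for every $\tau=(l,l',\rho_{\tau})\in\transitions$, a \emph{completion transition} $(l,l',\tilde{\rho}_{\tau})$ where $\tilde{\rho}_{\tau}=\{(\mathbf{x},\mathbf{x}')\in\rho_{\tau}\mid \mathbf{x}\notin\mathcal{C}_l\}$ fires only from source valuations outside $\mathcal{C}_l$. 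Every relation appearing in $\transitions_U$ is a subset of the corresponding relation of $\trsys$, so $U$ is an under-approximation; finiteness is clear since one completion transition is added per $\tau\in\transitions$.

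Verifying the two required properties is then routine. For \emph{properness}, take any $(l,\mathbf{x})$ with a successor in $\trsys$. If $\mathbf{x}\in\mathcal{C}_l$, then recurrence condition (2) already supplies a $U_0$-successor, which survives in $U$; if $\mathbf{x}\notin\mathcal{C}_l$, then the completion transition of a witnessing $\tau$ contributes the needed successor in $U$. For $\mathcal{C}$ being a \emph{closed recurrence set} in $U$, note first that the set $\mathcal{C}$ itself is unchanged, so non-emptiness, absence of terminal configurations, and containment of an initial configuration (whence reachability of a member of $\mathcal{C}$ via the length-zero path) are inherited verbatim from $U_0$. The remaining conditions constrain only the successors of $\mathcal{C}$-configurations; but since the completion transitions fire exclusively from valuations outside each $\mathcal{C}_l$, every configuration of $\mathcal{C}$ has exactly the same successors in $U$ as in $U_0$. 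Consequently it still has a successor in $\mathcal{C}$ and all of its successors remain in $\mathcal{C}$.

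The step I expect to demand the most care is precisely this interaction between making $U$ proper and preserving the closedness of $\mathcal{C}$: naively restoring missing successors transition-by-transition would reintroduce escaping successors out of $\mathcal{C}$ and destroy inductiveness, the defining feature of a \emph{closed} recurrence set. Restricting each completion transition to source valuations outside $\mathcal{C}_l$ is the device that decouples the two goals, and the one place where the bookkeeping must be handled cleanly is checking that this restriction stays inside the transition-system formalism — it merely intersects an existing relation $\rho_{\tau}$ with a source condition, so the resulting $\tilde{\rho}_{\tau}$ is again a legitimate transition relation and $U$ is genuinely an under-approximation of $\trsys$.
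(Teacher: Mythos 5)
Your proof is correct and follows essentially the same route as the paper: the paper likewise starts from a closed recurrence set $\mathcal{C}$ in an arbitrary under-approximation and then restores, for every source configuration outside $\mathcal{C}$, its full set of $\trsys$-successors, the only (immaterial) difference being that the paper merges these restored pairs into the existing relations via a union, $\rho^U_{\tau} = \rho^{U'}_{\tau} \cup \{(\mathbf{x},\mathbf{x}')\in\rho_{\tau}\mid (l,\mathbf{x})\notin\mathcal{C}\}$, rather than adding separate completion transitions. In both arguments the decisive point is the same one you isolate: the added pairs fire only from sources outside $\mathcal{C}$, so configurations of $\mathcal{C}$ keep exactly their old successors, which simultaneously yields properness and preserves closedness.
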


\smallskip\noindent{\em $\BI$-certificate for non-termination.} We introduce and explain how backward invariants in combination with proper under-approximations can be used to characterize non-termination. 
Suppose $P$ is a program we want to show is non-terminating, and $\trsys$ is its transition system. Let $\mathit{Reach}_{\trsys}(\locterm)$ be the set of variable valuations of all reachable terminal configurations in $\trsys$. A \textit{$\BI$-certificate for non-termination} will consist of an ordered triple $(U,\BI,\Theta)$ of a proper under-approximation $U$ of $\trsys$, a predicate map $\BI$ and an assertion $\Theta$ such that
\begin{compactitem}
	\item $\Theta\supseteq\mathit{Reach}_{\trsys}(\locterm)$;
	\item $\BI$ is an inductive backward invariant in $U^{r,\Theta}$;
	\item $\BI$ is not an invariant in $\trsys$.
\end{compactitem}

\begin{theorem}[Soundness of our certificate]\label{thm:certsoundness}
	Let $P$ be a program and $\trsys$ its transition system. If there exists a $\BI$-certificate $(U,\BI,\Theta)$ in $\trsys$, then $P$ is non-terminating.
\end{theorem}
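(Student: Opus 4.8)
The plan is to use the certificate to exhibit a concrete non-terminating run of $P$: an infinite run starting from an initial configuration that never reaches $\locterm$. The starting point is condition~3. Since $\BI$ is not an invariant in $\trsys$, there is a reachable configuration $\mathbf{c}^\ast$ in $\trsys$ with $\mathbf{c}^\ast\models\neg\BI$. This $\mathbf{c}^\ast$ is my candidate diverging configuration, and the goal is to grow an infinite path out of it that stays inside $\neg\BI$ and avoids $\locterm$.

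First I would convert the backward-invariant hypothesis into a usable forward statement. Condition~2 gives that $\BI$ is inductive in $U^{r,\Theta}$; applying Theorem~\ref{lemma:inductiverreverse} to $U$ with $I=\neg\BI$ (so that $\neg I=\BI$, using that complementation is involutive) yields that $\neg\BI$ is inductive in $U$. Next I would build the infinite path. Because $U$ is a \emph{proper} under-approximation and every configuration of $\trsys$ has at least one successor, every configuration also has a successor in $U$; so starting from $\mathbf{d}_0=\mathbf{c}^\ast$ I pick at each step an arbitrary $U$-successor $\mathbf{d}_{i+1}$ of $\mathbf{d}_i$. Inductiveness of $\neg\BI$ in $U$ guarantees $\mathbf{d}_i\models\neg\BI$ for all $i$. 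Since $U$ under-approximates $\trsys$, each step $\mathbf{d}_i\to\mathbf{d}_{i+1}$ is also a transition of $\trsys$, so $\mathbf{d}_0,\mathbf{d}_1,\dots$ is an infinite path in $\trsys$; prepending a finite path from an initial configuration to $\mathbf{c}^\ast$ (which exists since $\mathbf{c}^\ast$ is reachable) turns it into a run from an initial configuration.

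The crux is to show this run never reaches $\locterm$, which is exactly where conditions~1 and~3 interact with the invariant. Suppose some $\mathbf{d}_i=(\locterm,\mathbf{x})$ were terminal. Each $\mathbf{d}_i$ is reachable in $\trsys$ (it lies on a $\trsys$-path from the reachable $\mathbf{c}^\ast$), so $\mathbf{x}\in\mathit{Reach}_{\trsys}(\locterm)\subseteq\Theta$ by condition~1, whence $\mathbf{x}\models\Theta$. But then $(\locterm,\mathbf{x})$ is an initial configuration of $U^{r,\Theta}$, hence trivially reachable there, and since $\BI$ is an invariant in $U^{r,\Theta}$ we get $\mathbf{x}\models\BI(\locterm)$, contradicting $\mathbf{d}_i\models\neg\BI$. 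Thus no $\mathbf{d}_i$ is terminal. A small tidy-up shows the prepended prefix also avoids $\locterm$: $\mathbf{c}^\ast$ itself is non-terminal by the same argument, and since $\locterm$ carries only the self-loop, a path ending at a non-terminal configuration cannot pass through $\locterm$. The resulting run therefore never reaches $\locterm$, so $P$ is non-terminating.

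The step I expect to be the main obstacle is precisely this last one: ruling out that the path reaches $\locterm$. Inductiveness alone keeps the path inside $\neg\BI$ but does not by itself forbid a terminal configuration along the way; the argument must combine reachability of the path configurations in $\trsys$ with the "$\Theta$ covers reachable terminal valuations" condition and the observation that a reachable terminal configuration with valuation in $\Theta$ is an initial configuration of $U^{r,\Theta}$ and hence satisfies the invariant $\BI$. Everything else---invoking properness to keep extending the path, and using the under-approximation to transport the path back into $\trsys$---is routine. I note that Lemma~\ref{lemma:key} is not strictly needed in the formal argument; it only motivates why $\BI$ over-approximates the configurations that can reach $\locterm$, whereas the invariant property of $\BI$ at the initial configurations of $U^{r,\Theta}$ already suffices.
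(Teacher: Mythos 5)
Your proposal is correct and follows essentially the same route as the paper's own proof: extract a reachable configuration in $\neg\BI$ from condition~3, transfer inductiveness of $\BI$ in $U^{r,\Theta}$ to inductiveness of $\neg\BI$ in $U$ via Theorem~\ref{lemma:inductiverreverse}, extend an infinite run inside $\neg\BI$ using properness of $U$, and rule out terminal configurations by combining condition~1 with the invariance of $\BI$ at the initial configurations of $U^{r,\Theta}$. Your write-up is in fact slightly more explicit than the paper's at the crux (why reachable terminal configurations must satisfy $\BI$), but the argument is the same.
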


\begin{proof}[Proof sketch]
	As $\BI$ is not an invariant in $\trsys$, its complement $\neg \BI$ contains a reachable configuration $\mathbf{c}$. On the other hand, $\BI$ is inductive in $U^{r,\Theta}$ so by Theorem~\ref{lemma:inductiverreverse} $\neg \BI$ is inductive in $U$. Since $U$ is proper (and since in transition systems induced by programs every configuration has a successor), one may take a finite path reaching $\mathbf{c}$ and inductively keep picking successors in $U$ from $\mathbf{c}$, obtaining an execution whose all but finitely many configurations are in $\neg \BI$. By the definition of $\Theta$ and since $\BI$ is an invariant for $U^{r,\Theta}$, $\neg \BI$ contains no reachable terminal configuration hence this execution is non-terminating. Details can be found in Appendix~\ref{app:certificate}.
\end{proof}

\begin{example}\label{ex:certifiex}
	Consider again the running example and its transition system $\trsys$ presented in Fig.~\ref{fig:running}. Let $U$ be the under-approximation of $\trsys$ defined by restricting the transition relation of the non-deterministic assignment $x:=\textbf{ndet}()$ as $\rho^U_{\tau}=\{(x,y,x',y')\mid x'=9, y'=y\}$. Intuitively, $U$ is a transition system of the program obtained by replacing the non-deterministic assignment in $P$ with $x:=9$. Define a predicate map $\BI$ as
	\begin{equation*}
	\BI(l) = \begin{cases}
	(1\geq 0) &\text{if $l=\locterm$}\\
	(x\leq 8) &\text{if $l\in\{l_0,l_2,l_3,l_4$\}}\\
	(-1\geq 0) &\text{if $l=l_1$},
	\end{cases}
	\end{equation*}
	i.e.~$\BI(l_1)$ is empty, and let $\Theta=\mathbb{Z}^2$. $U^{r,\Theta}$ can be obtained from $\trsys^{r,\Theta}$ by replacing the transition relation from $l_2$ to $l_1$ with $x=9\land y=y'$ in Fig.~\ref{fig:running} right. Then $U$ is proper, and $BI$ is an inductive backward invariant for $U^{r,\Theta}$ since no transition can increase $x$. On the other hand, $(l_0,9,0)$ is reachable in $\trsys$ but not contained in $\BI$, thus $\BI$ is not an invariant in $\trsys$. Hence $(U,\BI,\Theta)$ is a $\BI$-certificate for non-termination and the program is non-terminating.
\end{example}

\noindent By making a connection to closed recurrence sets, the following theorem shows that backward invariants in combination with proper under-approximations of $\trsys$ also provide a \textit{complete characterization} of non-termination.

\begin{theorem}[Complete characterization of non-termination]\label{thm:completechar}
	Let $P$ be a non-terminating program with transition system $\trsys$. Then $\trsys$ admits a proper under-approximation $U$ and a predicate map $\BI$ such that $\BI$ is an inductive backward invariant in the reversed transition system $U^{r,\mathbb{Z}^{|\vars|}}$, but not an invariant in $\trsys$.
\end{theorem}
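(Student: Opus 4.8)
The plan is to extract everything from the closed-recurrence-set machinery of Lemma~\ref{lemma:properun} and transport it to the reversed system through Theorem~\ref{lemma:inductiverreverse}. Since $P$ is non-terminating, Lemma~\ref{lemma:properun} furnishes a proper under-approximation $U$ of $\trsys$ together with a closed recurrence set $\mathcal{C}$ in $U$. This $U$ is already the proper under-approximation demanded by the statement, so the only work left is to produce the predicate map $\BI$. I would take $\BI=\neg\mathcal{C}$, identifying the configuration set $\mathcal{C}$ with its predicate map and setting $\BI(l)=\neg\mathcal{C}(l)$ at each location $l$.

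It remains to check the two clauses. Inductiveness is immediate: a closed recurrence set is by definition inductive in $U$, so Theorem~\ref{lemma:inductiverreverse} yields at once that $\BI=\neg\mathcal{C}$ is inductive in $U^{r,\mathbb{Z}^{|\vars|}}$. To promote inductiveness to the full invariant property I would use the standard fact that an inductive predicate map containing all initial configurations over-approximates every reachable configuration (by induction on the length of a path from an initial configuration). The initial configurations of $U^{r,\mathbb{Z}^{|\vars|}}$ are exactly the pairs $(\locterm,\mathbf{x})$ with $\mathbf{x}\in\mathbb{Z}^{|\vars|}$, and here the choice $\Theta=\mathbb{Z}^{|\vars|}$ is what makes things work: because a recurrence set contains no terminal configuration, $\mathcal{C}(\locterm)=\emptyset$, hence $\BI(\locterm)=\mathbb{Z}^{|\vars|}$ covers every initial configuration of the reversed system. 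Together with inductiveness this makes $\BI$ an invariant, i.e.~an inductive backward invariant in $U^{r,\mathbb{Z}^{|\vars|}}$.

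For the last clause I would exploit that $\mathcal{C}$ is \emph{open}: it contains some initial configuration $\mathbf{c}_0=(\locinit,\mathbf{x}_0)$ of $U$. Since $U$ and $\trsys$ share the same $\locinit$ and $\Theta_{init}$, this $\mathbf{c}_0$ is also an initial, and therefore trivially reachable, configuration of $\trsys$. As $\mathbf{c}_0\in\mathcal{C}$ we have $\mathbf{c}_0\notin\BI$, so $\BI$ is violated at a reachable configuration of $\trsys$ and thus fails to be an invariant there, completing the argument.

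I expect no real obstacle, as the substance is already packaged in Lemma~\ref{lemma:properun} and Theorem~\ref{lemma:inductiverreverse}; the only point needing care is the promotion of $\BI$ from merely inductive to a genuine invariant of $U^{r,\mathbb{Z}^{|\vars|}}$, which is precisely where the absence of terminal configurations from $\mathcal{C}$ and the choice $\Theta=\mathbb{Z}^{|\vars|}$ combine to cover all initial configurations of the reversed system.
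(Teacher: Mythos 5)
Your proposal is correct and follows essentially the same route as the paper's own proof: invoke Lemma~\ref{lemma:properun} to obtain $U$ and a closed recurrence set $\mathcal{C}$, set $\BI=\neg\mathcal{C}$, use Theorem~\ref{lemma:inductiverreverse} for inductiveness in $U^{r,\mathbb{Z}^{|\vars|}}$, the absence of terminal configurations in $\mathcal{C}$ to cover the reversed system's initial configurations, and openness of $\mathcal{C}$ to refute invariance in $\trsys$. If anything, you are slightly more explicit than the paper in spelling out why inductiveness plus containment of all initial configurations yields a genuine invariant.
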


\begin{proof}[Proof sketch]
	Since $P$ is non-terminating, from Lemma~\ref{lemma:properun} we know that $\trsys$ admits a proper under-approximation $U$ and a closed recurrence set $C$ in $U$. For each location $l$ in $\trsys$, let $C(l)=\{\mathbf{x}\mid (l,\mathbf{x})\in C\}$. Define the predicate map $\BI$ as $\BI(l)=\neg C(l)$ for each $l$. Then, using Theorem~\ref{lemma:inductiverreverse} one can show that $U$ and $\BI$ satisfy the conditions of the theorem. For details, see Appendix~\ref{app:certificate}.
\end{proof}

\begin{rmk}[Connection to the \emph{pre}-operator]
\label{rmk:pre}
There is a certain similarity between reversal of an individual transition and application of the \emph{pre}-operator, the latter being a well known concept in program analysis. However, in our approach we introduce reversed transition systems which are obtained by reversing {\em all transitions} (hence the name ``program reversal''). This allows us using black-box invariant generation techniques as a \emph{one-shot} method of computing sets from which a terminal location can be reached, as presented in the next section. This is in contrast to approaches which rely on an iterative application of the \emph{pre}-operator.
\end{rmk}

\section{Algorithm for Proving Non-termination}\label{sec:nonterm}

We now present our algorithm for proving non-termination based on program reversing and $\BI$-certificates introduced in Section~\ref{sec:certificate}. It uses a black box constraint solving-based method for generating (possibly disjunctive) inductive invariants, as in~\cite{ColonSS03,GulwaniSV08,KincaidBBR17,KincaidCBR18,HrushovskiOP018,Rodriguez-CarbonellK04,Rodriguez-CarbonellK07,ChatterjeeFGG19}. This is a classical approach to invariant generation and it fixes a template for the invariant (i.e.~a type-$(c,d)$ propositional predicate map as well as an upper bound $D$ on the degree of polynomials, where $c$, $d$ and $D$ are provided by the user), introduces a fresh variable for each template coefficient, and encodes invariance and inductiveness conditions as existentially quantified constraints on template coefficient variables. The obtained system is then solved and any solution yields an inductive invariant. Moreover, the method is relatively complete~\cite{Rodriguez-CarbonellK04,Rodriguez-CarbonellK07,ChatterjeeFGG19} in the sense that every inductive invariant of the fixed template and maximal polynomial degree is a solution to the system of constraints. Efficient practical approaches to polynomial inductive invariant generation have been presented in~\cite{KincaidBBR17,KincaidCBR18}.


We first introduce \textit{resolution of non-determinism} which induces a type of proper under-approximations of the program's transition system of the form that allows searching for them via constraint solving. We then proceed to our main algorithm. In what follows, $P$ will denote a program with polynomial arithmetic and $\trsys=(\locs,\vars,\locinit,\Theta_{init},\transitions)$ will be its transition system.

\subsection{Resolution of non-determinism}\label{sec:resnondet}

As we saw in Example~\ref{ex:nonterm}, there may exist non-diverging program configurations which become diverging when supports of non-deterministic assignments are restricted to suitably chosen subsets. Here we define one such class of restrictions which ''resolves'' each non-deterministic assignment by replacing it with a polynomial expression over program variables. Such resolution ensures that the resulting under-approximation of the program's transition relation is proper. Let $T_{\NA}\subseteq\, \transitions$ be the set of transitions corresponding to non-deterministic assignments in $P$. 

\begin{definition}[Resolution of non-determinism]\label{def:restriction}
	A {\em resolution of non-determinism} for $\trsys$ is a map $R^{\NA}$ which to each $\tau\in T_{\NA}$ assigns a polynomial expression $R^{\NA}(\tau)$ over program variables. It naturally defines a {\em restricted transition system} $\trsys_{R^{\NA}}$ which is obtained from $\trsys$ by letting the transition relation of $\tau\in T_{\NA}$ corresponding to an assignment $x:=\textbf{ndet}()$ be
	\begin{equation*}
	\rho^{R^{\NA}}_{\tau}(\mathbf{x},\mathbf{x}') := (x'=R^{\NA}(\tau)(\mathbf{x}))\land \bigwedge_{y\in \vars\backslash\{x\}}y'=y.
	\end{equation*}
\end{definition}

\noindent Note that $\trsys_{R^{\NA}}$ is a proper under-approximation of $\trsys$. If there exists a resolution of non-determinism $R^{\NA}$ and a configuration $\mathbf{c}$ which is reachable in $\trsys$ but from which no execution in $\trsys_{R^{\NA}}$ terminates, then any such execution is non-terminating in $\trsys$ as well. We say that any such configuration $\mathbf{c}$ is {\em diverging with respect to (w.r.t.)~$R^{\NA}$}.

\begin{example}
	Looking back at the program in Figure~\ref{fig:running}, define a resolution of non-determinism $R^{\NA}$ to assign constant expression $9$ to the non-deterministic assignment $x:=\textbf{ndet}()$. Then every initial configuration with $x\geq 9$ becomes diverging w.r.t.~$R^{\NA}$.
\end{example}


\subsection{Algorithm}\label{sec:overview}

\noindent{\em Main idea.} To prove non-termination, our algorithm uses a constraint solving approach to find a $\BI$-certificate. It searches for a resolution of non-determinism $R^{\NA}$, a propositional predicate map $\BI$ and an assertion $\Theta$ such that:
\begin{compactenum}
	\item $\Theta\supseteq \mathit{Reach}_{\trsys}(\locterm)$ (recall that $\mathit{Reach}_{\trsys}(\locterm)$ is the set of variable valuations of all reachable terminal configurations in $\trsys$);
	\item $\BI$ is an inductive backward invariant for the reversed transition system $\trsys^{r,\Theta}_{R^{\NA}}$;
	\item $\BI$ is not an invariant for $\trsys$.
\end{compactenum}

\smallskip\noindent{\em Need for inductive invariants and safety checking.} Using the aforementioned black box invariant generation, our algorithm encodes the conditions on $R^{\NA}$, $\BI$, and $ \Theta $ as polynomial constraints and then solves them. However, the method is only able to generate \textit{inductive} invariants, which is to say that encoding ''$\BI$ is not an invariant for $\trsys\,$'' is not possible. Instead, we modify the third requirement on $\BI$ above to get:
\begin{compactenum}
	\item $\Theta\supseteq \mathit{Reach}_{\trsys}(\locterm)$;
	\item $\BI$ is an inductive backward invariant for $\trsys^{r,\Theta}_{R^{\NA}}$;
	\item $\BI$ is not an inductive invariant for $\trsys$.
\end{compactenum}
The third requirement does not guarantee that we get a proper $\BI$-certificate. However it guides invariant generation to search for $\BI$ which is less likely to be an invariant for $\trsys$. It follows that the algorithm needs to do additional work to ensure that the triple $(R^{\NA},\BI, \Theta) $ is a $ \BI $-certificate.

\smallskip\noindent{\em Splitting the algorithm into two checks.} The predicate map $\BI$ is not an inductive invariant for $\trsys$ if and only if it has one of the following properties: either it does not contain some initial configuration or is not inductive with respect to some transition in $\trsys$. For each of these two properties, we can separately compute $ \BI $ satisfying it and the properties~(1) and~(2) above, followed by a check whether the computed $ \BI $ indeed proves non-termination. We refer to these two independent computations as two checks of our algorithm:
\begin{compactitem}
	\item {\em Check 1} - the algorithm checks if there exist $R^{\NA}$, $\BI$ and $\Theta$ as above so that $\BI$ does not contain some initial configuration and conditions~(1) and~(2) are satisfied. By Theorem~\ref{lemma:inductiverreverse}, $\BI$ is inductive for $\trsys^{r,\Theta}_{R^{\NA}}$ if and only if the complement $\neg\BI$ is an inductive predicate map for $\trsys_{R^{\NA}}$. Moreover, since $\neg\BI$ contains an initial configuration there is no need for an additional reachability check to conclude that $\BI$ is not an invariant for $\trsys$. Hence by fixing $\Theta=\mathbb{Z}^{|\vars|}$, to prove non-termination it suffices to check if there exist a resolution of non-determinism $R^{\NA}$, a predicate map $I$ and an initial configuration $\mathbf{c}$ in $\trsys$ such that $I$ contains $\mathbf{c}$, $I$ is inductive for $\trsys_{R^{\NA}}$ and $I(\locterm)=\emptyset$.
	
	\item {\em Check 2} - the algorithm checks if there exist $R^{\NA}$, $\Theta$ and $\BI$ as above so that $\BI$ is not inductive in $\trsys$ and conditions~(1) and~(2) are satisfied. If a solution is found, the algorithm still needs to find a configuration in $\neg\BI$ which is reachable in $\trsys$, via a call to a safety prover.
\end{compactitem}

\begin{algorithm}[t]
	\SetKwInOut{Input}{input}\SetKwInOut{Output}{output}
	\DontPrintSemicolon
	
	\Input{A program $P$, its transition system $\trsys$, predicate map template size $(c,d)$, maximal polynomial degree $D$.}
	\Output{Proof of non-termination if found, otherwise ''Unknown''}
	\medskip
	set a template for each polynomial defined by resolution of non-determinism $R^{NA}$\\
	construct restricted transition system $\trsys_{R^{NA}}$\\
	set templates for configuration $\mathbf{c}$ and for an invariant $I$ of type-$(c,d)$\\
	encode $\Phi_1=\phi_{\mathbf{c}}\land\phi_{I,R^{\NA}}$\\
	\lIf{$\Phi_1$ feasible}{\Return Non-termination}
	\Else{
		set templates for invariant $\tilde{I}$ of type-$(c,1)$ and for a backward invariant $\BI$ of type-$(c,d)$\\
		construct reversed transition system $\trsys^{r,\tilde{I}(\locterm)}_{R^{NA}}$\\
		\lForEach{$\tau\in\,\transitions$}{
			set templates for $\mathbf{x}_{\tau}$, $\mathbf{x}'_{\tau}$}
		encode $\Phi_{2}=\phi_{\tilde{I}}\land\phi_{\BI,R^{\NA}}\land\bigvee_{\tau\in\,\transitions}\phi_{\tau}$\\
		\If{$\Phi_2$ feasible}{
			\lIf{$\exists$ $(l,\mathbf{x})$ Reachable in $\trsys$ with $\mathbf{x}\models\neg \BI(l)$}{\Return Non-termination}
			\lElse{\Return Unknown}}
		\lElse{\Return Unknown}
	}
	\caption{Proving non-termination}
	\label{algo:nonterm}
\end{algorithm}

\smallskip\noindent{\em Algorithm summary.} As noted at the beginning of Section~\ref{sec:nonterm}, the invariant generation method first needs to fix a template for the propositional predicate map and the maximal polynomial degree. Thus our algorithm is parametrized by $c$ and $d$ which are bounds on the template size of propositional predicate maps ($d$ being the maximal number of disjunctive clauses and $c$ being the maximal number of conjunctions in each clause), and by an upper bound $D$ on polynomial degrees.
The algorithm consists of two checks, which can be executed either sequentially or in parallel:

	{\em Check 1} - the algorithm checks if there exist a resolution of non-determinism $R^{\NA}$, a predicate map $I$ and an initial configuration $\mathbf{c}$ such that (1)~$I$ is an inductive invariant in $\trsys_{R^{\NA}}$ for the single initial configuration $\mathbf{c}$, and (2)~$I(\locterm)=\emptyset$. To do this, we fix a template for each of $R^{\NA}$, $I$ and $\mathbf{c}$, and encode these properties as polynomial constraints: 
	\begin{compactitem}
		\item For each transition $\tau$ in $T_{\NA}$, fix a template for a polynomial $R^{\NA}(\tau)$ over program variables of degree at most $D$. That is, introduce a fresh template variable for each coefficient of such a polynomial.
		\item Introduce fresh variables $c_1,c_2,\dots,c_{|\vars|}$ defining the variable valuation of $\mathbf{c}$. Then substitute these variables into the assertion $\Theta_{init}$ specifying initial configurations in $\trsys$ to obtain the constraint $\phi_{\mathbf{c}}$ for $\mathbf{c}$ being an initial configuration.
		\item Fix a template for the propositional predicate map $I$ of type-$(c,d)$ and maximal polynomial degree $D$. The fact that $I$ is an inductive invariant for $\trsys_{R^{\NA}}$ with the single initial configuration $\mathbf{c}$ and $I(\locterm)=\emptyset$ is encoded by the invariant generation method (e.g.~\cite{ColonSS03,Rodriguez-CarbonellK04}) into a constraint $\phi_{I,R^{\NA}}$.
	\end{compactitem}
	The algorithm then tries to solve $\Phi_1=\phi_{\mathbf{c}}\land\phi_{I,R^{\NA}}$ using an off-the-shelf SMT solver. If a solution is found, $\mathbf{c}$ is an initial diverging configuration w.r.t.~$\trsys_{R^{\NA}}$, so the algorithm reports non-termination.
	
 {\em Check 2} - the algorithm checks if there exist a resolution of non-determinism $R^{\NA}$, an assertion $\Theta$, a predicate map $\BI$ and a transition $\tau\in T_{\NA}$ such that (1)~$\Theta\supseteq \mathit{Reach}_{\trsys}(\locterm)$, (2)~$\BI$ is an inductive backward invariant for $\trsys^{r,\Theta}_{R^{\NA}}$, and (3)~$\BI$ is not inductive w.r.t.~$\tau$ in $\trsys$. 
	To encode $\Theta\supseteq\mathit{Reach}_{\trsys}(\locterm)$, we introduce another propositional predicate map $\tilde{I}$ (purely conjunctive for the sake of efficiency), and impose a requirement on it to be an inductive invariant for $\trsys$. We may then define the initial variable valuations for $\trsys^{r,\Theta}_{R^{\NA}}$ as $\Theta=\tilde{I}(\locterm)$. The algorithm introduces fresh template variables for $R^{\NA}$, $\tilde{I}$ and $\BI$, as well as for a pair of variable valuations $\mathbf{x}_{\tau}$ and $\mathbf{x}'_{\tau}$ for each transition $\tau=(l,l',\rho_{\tau})$ in $\trsys$ and imposes the following constraints:
	\begin{compactitem}
		\item For each transition $\tau$ in $T_{\NA}$, fix a template for a polynomial expression $R^{\NA}(\tau)$ of degree at most $D$ over program variables.
		\item Fix a template for the propositional predicate map $\tilde{I}$ of type-$(c,1)$ (as explained above, for efficiency reasons we make $\tilde{I}$ conjunctive) and impose a constraint $\phi_{\tilde{I}}$ that $\tilde{I}$ is an inductive invariant for $\trsys$.
		\item Fix a template for the propositional predicate map $BI$ of type-$(c,d)$ and impose a constraint $\phi_{\BI,R^{\NA}}$ that $\BI$ is an inductive backward invariant for $\trsys^{r,\tilde{I}(\locterm)}_{R^{\NA}}$.
		\item For each transition $\tau$ in $\trsys$, the constraint $\phi_{\tau}$ encodes non-inductiveness of $\BI$ with respect to $\tau$ in $\trsys$:
		\begin{equation*}
		\mathbf{x},\mathbf{x}'\models \BI(l)\land \rho_{\tau}\land \neg \BI(l').
		\end{equation*}
	\end{compactitem}
	The algorithm then solves $\Phi_{2}=\phi_{\tilde{I}}\land\phi_{\BI,R^{\NA}}\land\bigvee_{\tau\in\,\transitions}\phi_{\tau}$
	by using an SMT-solver. If a solution is found, the algorithm uses an off-the-shelf safety prover to check if there exists a configuration in $\neg\BI$ reachable in $\trsys$. Such configuration is then diverging w.r.t.~$\trsys_{R^{\NA}}$, so we report non-termination.

The pseudocode for our algorithm is shown in Algorithm~\ref{algo:nonterm}. The following theorem proves soundness of our algorithm, and its proof can be found in Appendix~\ref{app:certificate}.

\begin{theorem}[Soundness]\label{thm:soundness}
	If Algorithm~\ref{algo:nonterm} outputs ''Non-termi\-nation'' for some input program $P$, then $P$ is non-terminating.
\end{theorem}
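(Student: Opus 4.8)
The plan is to show that each way the algorithm can report ''Non-termination'' produces a genuine $\BI$-certificate $(U,\BI,\Theta)$ in the sense of Section~\ref{sec:certificate}, after which the already-proved soundness of the certificate (Theorem~\ref{thm:certsoundness}) immediately gives that $P$ is non-terminating. Inspecting Algorithm~\ref{algo:nonterm}, ''Non-termination'' is returned in exactly two situations: when $\Phi_1$ is feasible (Check~1), or when $\Phi_2$ is feasible and the subsequent safety query returns a configuration reachable in $\trsys$ that lies in $\neg\BI$ (Check~2). I would therefore split the argument into these two cases. In both, the under-approximation is the restricted transition system $U=\trsys_{R^{\NA}}$ determined by the solution's resolution of non-determinism $R^{\NA}$, which is proper as noted just after Definition~\ref{def:restriction}.

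For Check~1, a feasible $\Phi_1$ yields $R^{\NA}$, a predicate map $I$ and an initial configuration $\mathbf{c}$ such that $I$ contains $\mathbf{c}$, $I$ is inductive in $\trsys_{R^{\NA}}$, and $I(\locterm)=\emptyset$. I would set $\BI=\neg I$ and $\Theta=\Zset^{|\vars|}$, so that $\Theta\supseteq\mathit{Reach}_{\trsys}(\locterm)$ holds trivially. Theorem~\ref{lemma:inductiverreverse} converts inductiveness of $I$ in $\trsys_{R^{\NA}}$ into inductiveness of $\neg I$ in $U^{r,\Theta}$. To upgrade this to the backward-\emph{invariant} property, I would use that $U^{r,\Theta}$ has initial location $\locterm$ with initial valuations $\Theta=\Zset^{|\vars|}$, while $I(\locterm)=\emptyset$ gives $(\neg I)(\locterm)=\Zset^{|\vars|}$; hence every initial configuration of $U^{r,\Theta}$ lies in $\neg I$, and inductiveness propagates this to all reachable configurations, so $\neg I$ is an inductive backward invariant. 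Finally, since the initial (hence reachable) configuration $\mathbf{c}$ lies in $\neg\BI=I$, the map $\BI$ is not an invariant in $\trsys$, and $(\trsys_{R^{\NA}},\neg I,\Zset^{|\vars|})$ is a $\BI$-certificate.

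For Check~2, a feasible $\Phi_2$ together with a successful safety query yields $R^{\NA}$, the conjunctive inductive invariant $\tilde{I}$ of $\trsys$, the predicate map $\BI$, and a configuration $(l,\mathbf{x})$ reachable in $\trsys$ with $\mathbf{x}\models\neg\BI(l)$. I would take $\Theta=\tilde{I}(\locterm)$. Because $\tilde{I}$ is an invariant of $\trsys$, every reachable terminal configuration satisfies $\tilde{I}(\locterm)$, giving $\mathit{Reach}_{\trsys}(\locterm)\subseteq\Theta$. The constraint $\phi_{\BI,R^{\NA}}$ already provides that $\BI$ is an inductive backward invariant in $\trsys^{r,\Theta}_{R^{\NA}}=U^{r,\Theta}$, and the witness from the safety query shows $\BI$ is not an invariant in $\trsys$; the non-inductiveness constraint $\phi_\tau$ only steers the solver and plays no role in soundness. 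Thus $(\trsys_{R^{\NA}},\BI,\tilde{I}(\locterm))$ is again a $\BI$-certificate, and Theorem~\ref{thm:certsoundness} finishes the proof.

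I expect the main obstacle to be the Check~1 step of promoting inductiveness of $\neg I$ to a genuine backward invariant: Theorem~\ref{lemma:inductiverreverse} transfers only inductiveness, and it is precisely the side condition $I(\locterm)=\emptyset$—which makes $\neg I$ contain the entire initial set of the reversed system at $\locterm$—that closes this gap. A smaller point requiring care is the inclusion $\Theta\supseteq\mathit{Reach}_{\trsys}(\locterm)$ in Check~2, which relies on $\tilde{I}$ being an actual invariant of $\trsys$ and not merely inductive; this is guaranteed because $\phi_{\tilde{I}}$ encodes inductiveness together with containment of the initial configurations.
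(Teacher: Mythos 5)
Your proposal is correct and follows essentially the same route as the paper's own proof: a case split on the two checks, construction of the $\BI$-certificates $(\trsys_{R^{\NA}},\neg I,\mathbb{Z}^{|\vars|})$ and $(\trsys_{R^{\NA}},\BI,\tilde{I}(\locterm))$ respectively, and an appeal to Theorem~\ref{thm:certsoundness}. If anything, you are slightly more explicit than the paper on the step where inductiveness of $\neg I$ plus $(\neg I)(\locterm)=\mathbb{Z}^{|\vars|}$ is upgraded to the backward-invariant property, which the paper states tersely.
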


\begin{rmk}[Algorithm termination]
	Our algorithm might not always terminate because either the employed SMT-solver or the safety prover might diverge. Thus, in practice one needs to impose a timeout in order to ensure algorithm termination.
\end{rmk}

\subsection{Demonstration on Examples}\label{sec:algex}

We demonstrate our algorithm on two examples illustrating the key aspects. 
In Appendix~\ref{app:aperiodic}, we present an example demonstrating an application of our method on program whose all non-terminating traces are aperiodic.

\begin{example}\label{ex:runningnonterm}
	Consider again our running example in Fig.~\ref{fig:running}. We demonstrate that Check 1 of our algorithm can prove that it is non-terminating. Define the resolution of non-deter\-minism $R^{\NA}$ to assign a constant expression $9$ to the non-deterministic assignment, an initial configuration $\mathbf{c}=(\locinit,9,0)$, and a propositional predicate map $I$ as $I(\loc)=(x\geq 9)$ for $\loc\neq \locterm$ and $I(\locterm)=\emptyset$. Then $I$ is an inductive invariant for $\trsys_{R^{\NA}}$ with the initial configuration $\mathbf{c}$. Thus the system of polynomial constraints constructed by Check 1 is feasible, proving that this program is non-terminating.
\end{example}

\lstset{language=affprob}
\lstset{tabsize=2}
\newsavebox{\exappp}
\begin{lrbox}{\exappp}
	\begin{lstlisting}[mathescape]
	$n := 0,\, b := 0,\, u:=0$
	$l_0$:	while $b == 0$ and $n\leq 99$ do
	$l_1$:			$u := \textbf{ndet}()$
	$l_2$:			if $u\leq -1$ then 
	$l_3$:					$b:=-1$
				else if $u==0$ then 
	$l_4$:					$b:=0$
	$l_5$:			else $b:=1$ fi
	$l_6$:			$n := n + 1$
	$l_7$:			if $n \geq 100$ and $b \geq 1$ then
	$l_8$:					while true do
	$l_9$:						skip
	   od fi od
	\end{lstlisting}
\end{lrbox}

\begin{figure}[t]
	\centering
	\usebox{\exappp}
	\caption{An example of a program without an initial diverging configuration with respect to any resolution of non-determinism that uses polynomials of degree less than $100$, but for which Check 2 proves non-termination. }
	\label{fig:noninitialterm}
\end{figure}

\begin{example}\label{ex:noninitialterm}
	Consider the program in Fig.~\ref{fig:noninitialterm}. Its initial variable valuation is given by the assertion $(n=0\land b=0\land u=0)$, and a program execution is terminating so long as it does not assign $0$ to $u$ in the first $99$ iterations of the outer loop, and then at least $1$ in the $100$-th iteration. Thus, if the initial configuration was diverging with respect to a resolution of non-determinism which resolves the non-deterministic assignment of $u$ by a polynomial $p(n,b,u)$, this polynomial would need to satisfy $p(n,0,0)=0$ for $n=0,1,\dots,98$ and $p(99,0,0)\geq 1$. Hence, the degree of $p$ would have to be at least $100$, and this program has no initial diverging configuration with respect to any resolution of non-determinism that is feasible to compute by using the Check 1 of our algorithm.
	
	We now show that Check 2 can prove non-termination of this program using only polynomials of degree $0$, i.e.~constant polynomials. Define $R^{\NA}$, $\Theta$, $\BI$ and $\tau$ as follows:
	\begin{compactitem}
		\item $R^{\NA}$ assigns constant expression $1$ to the assignment of $u$ at $\loc_1$;
		\item $\tilde{I}(\loc)=(0\leq n\leq 100)$ for each location $\loc$;
		\item $\BI$ is a propositional predicate map defined via
		\begin{equation*}
		\BI(\loc) = \begin{cases}
		(0\leq n\leq 100) &\text{if $\loc=\locterm$}\\
		(n\leq 100) &\text{if $\loc=\loc_0$}\\
		(n\leq 99) \lor (n=100\land b\leq 0) &\text{if $\loc=\loc_7$}\\
		(n\leq 98) \lor (n=99\land b\leq 0) &\text{if $\loc=\loc_6$}\\
		(n\leq 98) &\text{if $\loc\in\{\loc_1,\loc_5\}$}\\
		(n\leq 99) &\text{if $\loc\in\{\loc_3,\loc_4\}$}\\
		(n\leq 98) \lor (n=99\land u\leq 0) &\text{if $\loc=\loc_2$}\\
		(1\leq 0) &\text{if $\loc\in\{l_8,l_9\}$};
		\end{cases}
		\end{equation*}
		\item $\tau$ is the transition from $\loc_0$ to $\loc_1$.
	\end{compactitem}
	To show that these $R^{\NA}$, $\tilde{I}$, $\BI$ and $\tau$ satisfy each condition in Check 2, we note that:
	\begin{compactenum}[(1)]
		\item The set of variable valuations reachable in the program upon termination is $(n,b)\in\{(n,b)\mid 1\leq n\leq 99 \land b!=0\} \cup \{(100,b)\mid b\leq 0\}$, thus $\Theta=\tilde{I}(\locterm)$ contains it;
		\item $\BI$ is an inductive backward invariant for $\trsys^{r,\tilde{I}(\locterm)}_{R^{\NA}}$ (which can be checked by inspection of the reversed transition system in Appendix~\ref{app:reversed});
		\item $\BI$ is not inductive w.r.t.~$\tau$ in $\trsys$, since $(99,0,0)\in \BI(\loc_0)$ but the variable valuation $(99,0,0)$ obtained by executing $\tau$ in $\trsys$ is not contained in $\BI(\loc_1)$.
	\end{compactenum}
	Thus, these $R^{\NA}$, $\tilde{I}$, $\BI$ and $\tau$ present a solution to the system of constraints defined by Check 2. Since the configuration $(\loc_1,99,0,0)$ is reachable in this program by assigning $u:=0$ in the first $99$ iterations of the outer loop, but $(99,0,0)\not\in \BI(\loc_1)$, the safety prover will be able to show that a configuration in $\neg \BI$ is reachable. Hence our algorithm is able to prove non-termination.
\end{example}

\subsection{Relative Completeness}\label{sec:completeness}

At the beginning of Section~\ref{sec:nonterm} we noted that constraint solving-based inductive invariant generation is relatively complete~\cite{ColonSS03,GulwaniSV08,Rodriguez-CarbonellK04,Rodriguez-CarbonellK07,ChatterjeeFGG19}, in the sense that whenever there is an inductive invariant representable using the given template, the algorithm will find such an invariant. This means that our algorithm is also relatively complete in checking whether the program satisfies properties encoded as polynomial constraints in Check 1 and Check 2.
Since successful Check~1 does not require a subsequent call to a safety prover, it provides to the best of our knowledge the first \textit{relatively complete algorithm} for proving non-termination of programs with polynomial integer arithmetic and non-determinism.

\begin{theorem}[Relative completeness]\label{thm:completeness}
	Let $P$ be a program with polynomial integer arithmetic and $\trsys$ its transition system. Suppose that $\trsys$ admits a proper under-approximation $U$ which restricts each non-deterministic assignment to a polynomial assignment, and a propositional predicate map $C$ which is a closed recurrence set in $U$. Then for sufficiently high values of parameters $c$, $d$ and $D$ bounding the template size for invariants and the maximal polynomial degree, our algorithm proves non-termination of the program $P$.
\end{theorem}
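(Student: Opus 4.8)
The plan is to show that the hypotheses already supply an explicit feasible assignment to the constraint system $\Phi_1$ solved in Check 1 of Algorithm~\ref{algo:nonterm}, so that by relative completeness of the underlying invariant generation the solver reports feasibility and the algorithm outputs ``Non-termination''. The crucial observation is that a closed recurrence set is \emph{exactly} the object Check 1 searches for: an inductive predicate map on the restricted system that contains an initial configuration and avoids $\locterm$. Because Check 1 needs no subsequent safety-prover call, the whole argument stays within Check 1 and never touches the (possibly incomplete) safety prover.

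First I would identify the under-approximation $U$ with a restricted transition system. Since $U$ turns each non-deterministic assignment into a polynomial assignment, it is precisely $\trsys_{R^{\NA}}$ for the resolution of non-determinism $R^{\NA}$ that reads off those polynomials (Definition~\ref{def:restriction}); let $D_0$ be their maximal degree, and recall $\trsys_{R^{\NA}}$ is automatically proper. Next I would read the three Check 1 requirements directly off the closed recurrence set $C$: being an open recurrence set, $C$ contains some initial configuration $\mathbf{c}$ (so $\mathbf{c}\models\Theta_{init}$); being inductive in $U=\trsys_{R^{\NA}}$, the predicate map $I:=C$ is inductive in $\trsys_{R^{\NA}}$; and since a closed recurrence set contains no terminal configurations, $I(\locterm)=C(\locterm)=\emptyset$. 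Thus $(R^{\NA},I,\mathbf{c})$ satisfies every semantic condition Check 1 encodes, namely that $I$ is an inductive invariant for $\trsys_{R^{\NA}}$ for the single initial configuration $\mathbf{c}$ with $I(\locterm)=\emptyset$; note this uses only forward inductiveness and so does not even invoke Theorem~\ref{lemma:inductiverreverse}.

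It remains to argue that Check 1 \emph{finds} such a witness for suitable parameters. As $C$ is a propositional predicate map, let $d_0$ be the maximal number of disjuncts and $c_0$ the maximal number of conjuncts per disjunct over all locations, and let $D_1$ bound the degrees of the polynomials appearing in $C$. Taking $c\geq c_0$, $d\geq d_0$ and $D\geq\max\{D_0,D_1\}$, the type-$(c,d)$ template with maximal polynomial degree $D$ can represent both $R^{\NA}$ and $C$. Hence the explicit triple above is a feasible assignment to the template variables of $\Phi_1=\phi_{\mathbf{c}}\land\phi_{I,R^{\NA}}$. By the relative completeness of the constraint-based invariant generation (every inductive invariant expressible in the template is a solution of the encoded constraints), $\Phi_1$ is feasible, so the solver returns \emph{some} solution and the algorithm reports non-termination; correctness of that output is guaranteed by Theorem~\ref{thm:soundness}.

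The main obstacle is the transfer of relative completeness in the last paragraph rather than the semantic translation in the first two: Check 1 solves for $R^{\NA}$ and $I$ \emph{simultaneously}, so the transition relations feeding the inductiveness constraints themselves contain the template coefficient variables defining $R^{\NA}$. I would therefore have to verify that, after substituting the resolution polynomials, the inductiveness and initiation constraints remain in the fragment (multilinear in the template variables) to which the cited relative-completeness results apply, and that fixing $R^{\NA}$ to the value extracted from $U$ collapses $\phi_{I,R^{\NA}}$ into an ordinary inductive-invariant encoding for the fixed system $\trsys_{R^{\NA}}$, for which those results directly yield that $C$ is a solution. Establishing this reduction is the only delicate point; everything else follows from matching the definition of a closed recurrence set against the conditions of Check 1.
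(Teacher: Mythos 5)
Your proposal is correct and follows essentially the same route as the paper, which gives no separate proof of this theorem but justifies it exactly as you do: a closed recurrence set in an under-approximation that resolves non-determinism by polynomial assignments is precisely a feasible witness $(R^{\NA}, I, \mathbf{c})$ for the semantic conditions of Check 1, and relative completeness of the template-based invariant generation then guarantees that $\Phi_1$ is feasible once the template parameters $c$, $d$, $D$ are large enough to express both the resolution polynomials and $C$. The delicate point you flag (the coefficients of $R^{\NA}$ appearing inside the inductiveness constraints) is treated the same way by the authors, who note that the constraints remain multilinear in the template variables and hence stay within the fragment covered by the cited relative-completeness results.
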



\smallskip\noindent While relative completeness guarantees in Theorem~\ref{thm:completeness} are the first such guarantees for programs with non-determinism, they only apply to non-termina\-ting programs that contain an initial diverging configuration w.r.t.~some resolution of non-determinism. However, Example~\ref{ex:noninitialterm} shows that finding such a configuration might require using very high degree polynomials to resolve non-determinism, and in general such a configuration need not exist at all in non-termina\-ting programs. In order to ensure catching non-termination bugs in such examples, an algorithm with stronger guarantees is needed. To that end, we propose a modification of our algorithm for programs in which non-determinism appears only in branching. The new algorithm provides {\em stronger relative completeness guarantees} that can detect non-terminating behavior in programs with no initial diverging configurations or for which Check~1 is not practical, including the program in Example~\ref{ex:noninitialterm} (that is, its equivalent version in which non-determinism appears only in branching as we demonstrate in Example~\ref{ex:strongconst}).

To motivate this modification, let us look back at the conditions imposed on the predicate map $\BI$ by our algorithm. $\BI$ is required not to be an invariant, so that $\neg \BI$ contains a reachable configuration. However, this reachability condition cannot be encoded using polynomial constraints, so instead we require that $\neg\BI$ is not an inductive invariant, and then employ a safety prover which does not provide any guarantees. Our modification is based on the recent work of~\cite{AsadiCFGM20}, which presents a relatively complete method for reachability analysis in polynomial programs with non-determinism appearing only in branching.

\smallskip\noindent{\em Relatively complete reachability analysis.} We give a high level description of the method in~\cite{AsadiCFGM20}. Let $P$ be a program with non-determinism appearing only in branching, $\trsys$ its transition system, and $C$ a set of configurations defined by a propositional predicate map. The goal of the analysis is to check whether some configuration in $C$ is reachable in $\trsys$.

The witness for the reachability of $C$ in~\cite{AsadiCFGM20} consists of (1)~an initial configuration $\mathbf{c}$, (2)~a propositional predicate map $C^{\diamond}$ that contains $\mathbf{c}$, and (3)~a polynomial ranking function $f^C$ for $C^{\diamond}$ with respect to~$C$. A {\em polynomial ranking function} for $C^{\diamond}$ with respect to~$C$ is a map $f^C$ that to each location $\loc\in\locs$ assigns a polynomial expression $f^C(l)$ over program variables, such that each configuration $(l,\mathbf{x})\in C^{\diamond}\backslash C$ has a successor $(l',\mathbf{x'})\in C^{\diamond}$ with
\[f^C(l)(\mathbf{x})\geq f^C(l')(\mathbf{x}')+1\, \land\, f^C(l)(\mathbf{x})\geq 0, \]
where $C^{\diamond}$ and $C$ are treated as sets of configurations.
Intuitively, this means that for each configuration $(l,\mathbf{x})\in C^{\diamond}\backslash C$, the value of $f^C$ at this configuration is non-negative and there is a successor of this configuration in $C^{\diamond}$ at which the value of $f^C$ decreases by at least $1$. If the program admits such a witness, then we may exhibit a path from $\mathbf{c}$ to a configuration in $C$ by inductively picking either a successor in $C$ (and thus proving reachability), or a successor in $C^{\diamond}\backslash C$ along which $f^C$ decreases by $1$. As the value of $f^C$ in $\mathbf{c}$ is finite and $f^C$ is non-negative on $C^{\diamond}\backslash C$, decrease can happen only only finitely many times and eventually we will have to pick a configuration in $C$. It is further shown in~\cite{AsadiCFGM20} that any reachable $C$ admits a witness in the form of an initial configuration, a predicate map and a (not necessarily polynomial) ranking function.

For programs with non-determinism appearing only in branching, it is shown in~\cite{AsadiCFGM20} that all the defining properties of $\mathbf{c}$, $C^{\diamond}$ and $f^C$ can be encoded using polynomial constraints. Thus~\cite{AsadiCFGM20} searches for a reachability witness by introducing template variables for $\mathbf{c}$, $C^{\diamond}$ and $f^C$, encoding the defining properties using polynomial constraints and then reducing to constraint solving. The obtained constraints are at most quadratic in the template variables, as was the case in our algorithm for proving non-termination. Moreover, their analysis is relatively complete - if a witness of reachability in the form of an initial configuration $\mathbf{c}$, a propositional predicate map $C^{\diamond}$ and a polynomial ranking function $f^C$ exists, the method of~\cite{AsadiCFGM20} will find it.

\smallskip\noindent{\em Modification of our algorithm.} The modified algorithm is similar to Check 2, with only difference being that we encode reachability of $\neg\BI$ using polynomial constraints instead of requiring it not to be inductive in $\trsys$. The algorithm introduces a template of fresh variables determining $R^{\NA}$, $\tilde{I}$ and $\BI$. In addition, it introduces a template of fresh variables determining an initial configuration $\mathbf{c}$, a propositional predicate map $C^{\diamond}$ and a polynomial ranking function $f^{\neg\BI}$. The algorithm then imposes the following polynomial constraints:
\begin{compactitem}
	\item Encode the same conditions on $R^{\NA}$, $\tilde{I}$ and $\BI$ as in Check 2 to obtain $\Phi_{\textit{backward}}$.
	\item Introduce fresh variables $c_1,c_2,\dots,c_{|\vars|}$ defining the variable valuation of $\mathbf{c}$. Then substitute these variables into the assertion $\Theta_{init}$ specifying initial configurations in $\trsys$ to obtain the constraint $\phi_{\mathbf{c}}$ for $\mathbf{c}$ being an initial configuration.
	\item Fix a template for the propositional predicate map $C^{\diamond}$ of type-$(c,d)$ and maximal polynomial degree $D$. Encode that $C^{\diamond}$ contains $\mathbf{c}$ into the constraint $\phi_{\mathbf{c}, C^{\diamond}}$.
	\item For each location $\loc$ in $\trsys$, fix a template for a polynomial $f^{\neg\BI}(\loc)$ over program variables of degree at most $D$. That is, introduce a fresh template variable for each coefficient of such a polynomial.
	\item Using the method of~\cite{AsadiCFGM20}, for each locaiton $\loc$ encode the following condition
	\begin{equation*}
	\begin{split}
	&\forall \mathbf{x}. \mathbf{x}\models C^{\diamond}(\loc) \Rightarrow \mathbf{x}\in \neg\BI(\loc) \lor\, \Big( \Big(\bigvee_{\tau=(\loc,\loc',\rho_{\tau})}\mathbf{x}'\models C^{\diamond}(\loc')\land \\
	&\rho_{\tau}(\mathbf{x},\mathbf{x}')\land f^{\neg\BI}(\loc)(\mathbf{x})\geq f^{\neg\BI}(\loc')(\mathbf{x}')+1\Big) \land\, f^{\neg\BI}(\loc)(\mathbf{x})\geq 0\Big),
	\end{split}
	\end{equation*}
	as a polynomial constraint $\phi_{\loc,\textit{reach}}$. Note that, since we assume that non-determinism appears only in branching and not in variable assignments, for any $\mathbf{x}$ there is at most one variable valuation $\mathbf{x}'$ such that $\rho_{\tau}(\mathbf{x},\mathbf{x}')$ is satisfied. Thus, the above condition indeed encodes the condition that, if $(\loc,\mathbf{x})\not\in\neg\BI$, then at least one successor configuration satisfies the ranking function property. It is shown in~\cite{AsadiCFGM20} that this condition can be encoded into existentially quantified polynomial constraints over template variables, by using analogous semi-algebraic techniques that are used for inductive invariant generation in~\cite{ColonSS03, ChatterjeeFGG19} and which we use for invariant synthesis. We then take $\Phi_{\textit{reach}}= \wedge_{\loc}\,\phi_{\loc,\textit{reach}}$.
\end{compactitem}
The algorithm then tries to solve $\Phi_{\textit{modified}}=\Phi_{\textit{backward}}\land \phi_{\mathbf{c}} \land \phi_{\mathbf{c}, C^{\diamond}}\land \Phi_{\textit{reach}}$.

Soundness of the modified algorithm follows the same argument as the proof of Theorem~\ref{thm:soundness}. The following theorem presents the stronger relative completeness guarantees provided by the modified algorithm.

\begin{theorem}[Stronger relative completeness]\label{thm:strongerguarantees}
	Let $P$ be a program with polynomial integer arithmetic, in which non-determinism appears only in branching. Let $\trsys$ be its transition system. Suppose that $\trsys$ admits
	\begin{compactenum}
		\item a proper under-approximation $U$ restricting each non-deterministic assignment to a polynomial assignment,
		\item a propositional predicate map $\tilde{I}$ which is an inductive invariant in $\trsys$,
		\item a propositional predicate map $\BI$ which is an inductive backward invariant in $\trsys^{r,\tilde{I}(\locterm)}_{U}$, and
		\item a witness of reachability of $\neg\BI$ as in~\cite{AsadiCFGM20}.
	\end{compactenum}
	Then for high enough values of $c$, $d$ and $D$ bounding the template size for invariants and the polynomial degree, our algorithm proves non-termination of the program $P$.
\end{theorem}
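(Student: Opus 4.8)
The plan is to reduce the theorem to two independent facts: \textbf{(i)} under hypotheses~(1)--(4), the constraint system $\Phi_{\textit{modified}}=\Phi_{\textit{backward}}\land \phi_{\mathbf{c}} \land \phi_{\mathbf{c}, C^{\diamond}}\land \Phi_{\textit{reach}}$ assembled by the modified algorithm becomes \emph{feasible} once the parameters $c$, $d$ and $D$ are taken large enough, and \textbf{(ii)} every solution of $\Phi_{\textit{modified}}$ certifies non-termination, so that a positive answer from the solver is always correct. Fact~(ii) is soundness and, as the excerpt notes, follows verbatim from the argument of Theorem~\ref{thm:soundness}: a solution yields a resolution $R^{\NA}$, an inductive backward invariant $\BI$ in $\trsys^{r,\tilde I(\locterm)}_{R^{\NA}}$ with $\tilde I(\locterm)\supseteq\mathit{Reach}_{\trsys}(\locterm)$, together with an honest witness that $\neg\BI$ is reachable, which is exactly a $\BI$-certificate. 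Hence the substance of the proof is fact~(i): turning the \emph{existence} of the four hypothesised objects into \emph{feasibility} of the encoded system.

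For fact~(i) I would first fix the parameters. Each hypothesised object lives in some template class: the polynomial assignments of $U$ define a resolution $R^{\NA}$ of some degree $D_U$ with $U=\trsys_{R^{\NA}}$ (cf.\ Definition~\ref{def:restriction}); $\tilde I$ is an inductive invariant of $\trsys$ fitting a conjunctive type-$(c_{\tilde I},1)$ template as sought by the algorithm; $\BI$ is a type-$(c_{\BI},d_{\BI})$ inductive backward invariant of $U^{r,\tilde I(\locterm)}$; and the reachability witness of~\cite{AsadiCFGM20} consists of an initial configuration $\mathbf{c}$, a type-$(c_{C},d_{C})$ predicate map $C^{\diamond}$ and a polynomial ranking function $f^{\neg\BI}$. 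Because type-$(c,d)$ templates are \emph{nested} --- a smaller template is recovered from a larger one by padding conjuncts with $1\geq 0$ and disjuncts with $-1\geq 0$ --- I can set $c=\max\{c_{\tilde I},c_{\BI},c_{C}\}$ and $d=\max\{d_{\BI},d_{C}\}$, and take $D$ large enough to bound every polynomial degree occurring in $R^{\NA}$, $\tilde I$, $\BI$, $C^{\diamond}$ and $f^{\neg\BI}$, so that all four objects simultaneously fit the templates used by the algorithm.

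Next I would invoke relative completeness of each encoding on this common assignment to the template variables. The invariant-generation encoding is relatively complete~\cite{Rodriguez-CarbonellK04,Rodriguez-CarbonellK07,ChatterjeeFGG19}, so instantiating the $\tilde I$- and $\BI$-templates by the hypothesised $\tilde I$ and $\BI$ (and the $R^{\NA}$-template by $U$'s polynomials) satisfies $\Phi_{\textit{backward}}=\phi_{\tilde I}\land\phi_{\BI,R^{\NA}}$; here I use Theorem~\ref{lemma:inductiverreverse} to pass between inductiveness of $\BI$ in the reversed system and of $\neg\BI$ in $U$. Symmetrically, the reachability encoding of~\cite{AsadiCFGM20} is relatively complete, so instantiating $\mathbf{c}$, $C^{\diamond}$ and $f^{\neg\BI}$ by the hypothesised witness satisfies $\phi_{\mathbf{c}}\land\phi_{\mathbf{c},C^{\diamond}}\land\Phi_{\textit{reach}}$. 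The crucial compatibility point is that the target set of the reachability witness is exactly $\neg\BI$ for the \emph{same} $\BI$ appearing in the backward part: the two subsystems share the $\BI$-template variables, and hypothesis~(4) supplies a witness for precisely this $\neg\BI$. Consequently a single joint assignment satisfies the whole conjunction $\Phi_{\textit{modified}}$, which is therefore feasible; the solver returns some solution and, by fact~(ii), the algorithm reports non-termination.

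I expect the main obstacle to be justifying that the \emph{conjoined} system inherits relative completeness from its pieces, i.e.\ that our backward-invariant encoding and the reachability encoding of~\cite{AsadiCFGM20} are faithful enough that the hypothesised objects form one simultaneous satisfying assignment rather than two separately-satisfiable-but-incompatible ones. This rests on the shared-variable observation above and on the ``at most one successor'' property guaranteed by non-determinism occurring only in branching, which is what makes the disjunctive ranking condition in $\Phi_{\textit{reach}}$ a correct encoding of reachability (and keeps the soundness half of fact~(ii) intact). A secondary, more routine point is confirming that the padding/nesting of templates never disturbs the inductiveness or ranking constraints, so that enlarging $c$, $d$, $D$ cannot destroy a solution that existed at smaller parameters.
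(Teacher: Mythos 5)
Your proposal is correct and takes essentially the same route as the paper: the paper states this theorem without a separate proof, its justification being precisely the argument you spell out --- soundness inherited from the argument of Theorem~\ref{thm:soundness}, plus feasibility of $\Phi_{\textit{modified}}$ for large enough $c$, $d$, $D$ via the relative completeness of the invariant-generation encoding and of the reachability encoding of~\cite{AsadiCFGM20}, applied to the hypothesised $U$, $\tilde{I}$, $\BI$ and reachability witness, with the shared $\BI$ template variables tying the two halves together. The only detail you gloss over (which the paper itself relegates to a remark and Example~\ref{ex:strongconst}) is that the reachability constraints are encoded over the branching-non-determinism version of $P$ while the backward-invariant constraints use the assignment-non-determinism version obtained by the transformation of Section~\ref{sec:prelims}, so the ``shared $\BI$'' claim implicitly requires the natural identification of locations between the two programs.
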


\begin{rmk}
	The method of~\cite{AsadiCFGM20} encodes constraints for programs in which non-determinism appears only in branching, whereas in this work we talked about constraint encoding for programs in which non-determinism appears only in assignments. This is not an issue in the modified algorithm - we can always start with a program in which non-determinism appears only in branching to encode the reachability witness constraints, and then apply the trick from Section~\ref{sec:prelims} to replace each non-deterministic branching by an assignment.
\end{rmk}

\begin{example}\label{ex:strongconst}
	We show that the relative completeness guarantees of the modified algorithm apply to the program obtained from Fig.~\ref{fig:noninitialterm} by replacing the non-deterministic assignment of $u$ and the subsequent conditional branching with the non-deterministic branching given by \textbf{if $\ast$ then}. Specifically, the new program is obtained by removing the non-deterministic assignment of $u$ from the program, merging $\loc_1$ and $\loc_2$ in Fig.~\ref{ex:noninitialterm} into the new location $\loc_{1,2}$ and replacing the conditional by the non-deterministic branching. The reachability constraints for the modified algorithm are then encoded with respect to this new program. On the other hand, to encode the constraints as in Check~$2$, we consider the original program in Fig.~\ref{fig:noninitialterm}.
	
	To see that this program  satisfies the conditions of Theorem~\ref{thm:strongerguarantees}, we define $R^{\NA}$, $\tilde{I}$ and $\BI$ as in Example~\ref{ex:noninitialterm}.
	Then, one witness of reachability of $\neg\BI$ (where we identify $\loc_{1,2}$ with $\loc_1$) is defined by $\mathbf{c}=(\locinit,0,0,0)$,
	\begin{equation*}
	C^{\diamond}(\loc) = \begin{cases}
	(0\leq n\leq 99 \land b=0\land u=0) &\text{if $\loc\in\{\loc_0,\loc_{1,2}\}$}\\
	(0\leq n\leq 98 \land b=0\land u=0) &\text{if $\loc\in\{\loc_4,\loc_6\}$}\\
	(1\leq n\leq 99 \land b=0\land u=0) &\text{if $\loc=\loc_7$}\\
	(1\leq 0) &\text{otherwise};
	\end{cases}
	\end{equation*}
	and
	\begin{equation*}
	f^{\neg\BI}(\loc,n,b,u) = \begin{cases}
	5\cdot (100-n)+3 &\text{if $\loc=\loc_0$}\\
	5\cdot (100-n)+2 &\text{if $\loc=\loc_{1,2}$}\\
	5\cdot (100-n)+1 &\text{if $\loc=\loc_4$}\\
	5\cdot (100-n)+0 &\text{if $\loc=\loc_6$}\\
	5\cdot (100-n)+4 &\text{if $\loc=\loc_7$}\\
	0 &\text{otherwise};
	\end{cases}
	\end{equation*}
	To see that this is indeed the witness of reachability of $\neg\BI$, observe 
	$C^{\diamond}$ contains precisely the set of all configurations along the path from $\mathbf{c}=(\locinit,0,0,0)$ to the configuration $(\loc_1,99,0,0)$ in $\neg\BI$ that we described in Example~\ref{ex:noninitialterm} (recall, for reachability analysis we identify $\loc_1$ with $\loc_{1,2}$ in the modified program in which non-determinism appears only in branching), and that $f^{\neg\BI}$ is non-negative along this path and decreases by exactly $1$ in each step along the path.
\end{example}

\section{Experiments}\label{sec:experiments}

We present a prototype implementation of our algorithm in our tool $\RevTerm$.  Our implementation is available at the following link: \url{https://github.com/ekgma/RevTerm.git}.
We follow a standard approach to invariant generation \cite{ColonSS03,GulwaniSV08,ChatterjeeFGG19} which only fixes predicate map templates at cutpoint locations. For safety prover we use CPAchecker~\cite{BeyerK11} and for constraint solving we use three SMT-solvers: Barcelogic 1.2~\cite{BofillNORR08}, MathSAT5~\cite{mathsat5} and Z3~\cite{MouraB08}.

Since non-determinism in all our benchmarks appears in variable assignments only, we implemented only our main algorithm and not the modified algorithm with stronger guarantees for programs with branching-only non-determinism.

\smallskip\noindent{\em Benchmarks.} We evaluated our approach on benchmarks from the category 
{\em Termination of C-Integer Programs} of the Termination and Complexity Competition ($\TermComp$ \citep{GieslRSWY19}). The benchmark suite consists of 335 programs with non-determinism: 111 non-terminating, 223 terminating, and the Collatz conjecture for which termination is unknown. We compared $\RevTerm$ against the best state-of-the-art tools that participated in this category, namely $\AProVE$~\cite{GieslABEFFHOPSS17}, $\Ultimate$ \cite{ChenHLLTTZ18}, $\VeryMax$~\cite{BorrallerasBLOR17}, and also $\LoAT$~\cite{FrohnG19}.

\smallskip\noindent{\em Configurations of our tool.} Recall that our algorithm is parameterized by the template size for propositional predicate maps and the maximal polynomial degree. Also, it performs two checks which can be run sequentially or in parallel. Thus a {\em configuration} of $\RevTerm$ is defined by
	(a) the choice of whether we are running Check 1 or Check 2,
	(b) the template size $(c,d)$ for propositional predicate maps and the maximal polynomial degree $D$, and
	(c) the choice of an SMT-solver.
Our aim is to compare our algorithm to other existing approaches to non-termination proving and demonstrate generality of its relative completeness guarantees, ra\-ther than develop an optimized tool. Hence we test each configuration separately and count the total number of benchmarks that were proved to be non-terminating by at least one of the configurations. 
We consider configurations for both checks, each of the three SMT-solvers, and all template sizes in the set $\{(c,d,D)\mid 1\leq c\leq 5, 1\leq d \leq 5, 1\leq D\leq 2\}$.

\begin{center}
	\begin{table}[t]
		\centering
		\caption{Experimental results 
			with evaluation performed on the first platform. The NO/YES/MAYBE rows contain the total number of benchmarks which were proved non-terminating, terminating, or for which the tool proved neither, respectively. The next row contains the number of benchmarks proved to be non-terminating only by the respective tool. We also report the average and standard deviation (std.~dev.) of runtimes. The last two rows show the runtime statistics limited to successful non-termination proofs.}
		\begin{tabular}{|c||c|c|c|c|c|}
			\hline
			& $\RevTerm$ & $\Ultimate$ & $\VeryMax$ \\\hline\hline
			NO & 107 & 97 & 103 \\\hline
			YES & 0 & 209 & 213 \\\hline
			MAYBE & 228 & 29 & 19 \\\hline
			Unique NO & 3 & 1 & 0 \\\hline
			Avg.~time & 1.2s & 5.0s & 3.7s \\\hline
			Std.~dev. & 3.0s & 3.7s & 7.3s \\\hline
			Avg.~time for NO & 1.2s & 4.4s & 10.6s \\\hline
			Std.~dev. for NO & 3.0s & 3.8s & 9.4s \\\hline
		\end{tabular}
		\label{tab:exp1}
	\end{table}
\vspace{-1em}
\end{center}

\def \hfillx {\hspace*{-\textwidth} \hfill}

\begin{center}
	\begin{table*}[t]
		\begin{minipage}[t]{0.5\textwidth}
		\caption{Experimental results
		with evaluation performed on $\StarExec$~\cite{StumpST14}. The meaning of data is the same as in Table~\ref{tab:exp1}.}
		\begin{tabular}{|c||c|c|c|c|c|}
			\hline
			& $\RevTerm$ & $\LoAT$ & $\AProVE$ & $\Ultimate$ & $\VeryMax$ \\\hline\hline
			NO & 103 & 96 & 99 & 97 & 102 \\\hline
			YES & 0 & 0 & 216 & 209 & 212 \\\hline
			MAYBE & 232 & 239 & 20 & 29 & 21 \\\hline
			Unique NO & 2 & 1 & 0 & 0 & 0 \\\hline
			Avg.~time & 1.8s & 2.6s & 4.2s & 7.4s & 3.8s \\\hline
			Std.~dev. & 6.6s & 0.9s & 4.1s & 4.9s & 7.4s \\\hline
			Avg.~time NO & 1.8s & 2.6s & 5.0s & 7.0s & 10.8s \\\hline
			Std.~dev. NO & 6.6s & 0.9s & 3.9s & 7.1s & 9.5s \\\hline
		\end{tabular}
		\label{tab:exp2}
	\end{minipage}
	\hfillx
	\begin{minipage}[t]{0.43\textwidth}	
	\caption{Comparison of configurations based on which check they run and the SMT-solver used.
	}
		\begin{tabular}{|c||c|c|c|c|}
			\hline
			& Barcelogic & MathSAT5 & Z3 & Total \\
			& 1.2 & & & \\\hline\hline
			Check 1 & 84 & 98 & 80 & 103 \\\hline
			Check 2 & 69 & 54 & 63 & 74 \\\hline
			Total & 96 & 98 & 82 & 107 \\\hline
			
		\end{tabular}
		\label{tab:exp3}
	\end{minipage}
	\end{table*}
\vspace{-1em}
\end{center}

\noindent{\em Experimental results.} Our experiments were run on two platforms, and we include the results for each of them in separate tables. The first platform is Debian, 128 GB RAM, Intel(R) Xeon(R) CPU E5-1650 v3 @ 3.50GHz, 12 Threads. The experimental results are presented in Table~\ref{tab:exp1} and the timeout for each experiment was 60s.

 We could not install the dependencies for $\AProVE$ on the first platform and $\LoAT$ does not support the input format of benchmarks, so we also evaluate all tools except for $\LoAT$ on $\StarExec$~\cite{StumpST14} which is a platform on which $\TermComp$ was run. We take the results of the evaluation of $\LoAT$ on $\StarExec$ from~\cite{FrohnG19} which coupled it with $\AProVE$ for conversion of benchmarks to the right input format. Note however that the solver Barcelogic 1.2 is not compatible with $\StarExec$ so the number of non-terminations $\RevTerm$ proves is smaller compared to Table~\ref{tab:exp1}. The experimental results are presented in Table~\ref{tab:exp2}, and the timeout for each experiment was 60s. The timeout in both cases is on wallclock time and was chosen to match that in~\cite{FrohnG19}. We note that in $\TermComp$ the timeout was 300s and $\Ultimate$ proved 100 non-terminations, whereas $\AProVE$ and $\VeryMax$ proved the same number of non-terminations as in Table~\ref{tab:exp2}.

 From Tables~\ref{tab:exp1} and~\ref{tab:exp2} we can see that $\RevTerm$ outperforms other tools in terms of the number of proved non-terminations. The average time for $\RevTerm$ is computed by taking the fastest successful configuration on each benchmark, so the times indicate that running multiple configurations in parallel would outperform the state-of-the-art. Since $\AProVE$, $\Ultimate$ and $\VeryMax$ attempt to prove either termination or non-termination of programs, we include both their average times for all solved benchmarks and for non-termination proofs only.

\smallskip\noindent {\em Performance by configuration.} We now discuss the performance of each configuration based on whether it runs Check 1 or Check 2 and based on which SMT-solver it uses. For the purpose of this comparison we only consider evaluation on the first platform which supports Barcelogic 1.2. Comparison of configurations in terms of the total number of solved benchmarks is presented in Table~\ref{tab:exp3}. We make two observations:
\begin{compactitem}
	\item Configurations using Check 1 prove 103 out of 112 non-terminations, which matches the performance of all other tools. This means that the relative completeness guarantees provided by our approach are quite general.
	\item Even though some SMT-solvers perform well and solve many benchmarks, none of them reaches the number 107. This means that our performance is dependent on the solver choice and designing a successful tool would possibly require multiple solvers. For example, from our results we observed that MathSAT5 performs particularly well for Check 1 with templates of small size ($c,d\in\{1,2\}$), while Barcelogic 1.2 is best suited for templates of larger size (with $c\geq 3$) and for Check 2. While this could be seen as a limitation of our approach, it also implies that our algorithm would become even more effective with the improvement of SMT-solvers.
\end{compactitem}
 Finally, in Appendix~\ref{app:exp} we present a comparison of configurations based on the template sizes for propositional predicate maps. A key observation there is that for any benchmark that $\RevTerm$ proved to be non-terminating, it was sufficient to use a template for predicate maps with $c\leq 3$, $d\leq 2$ and $D\leq 2$. This implies that with a smart choice of configurations, it suffices to run a relatively small number of configurations which if run in parallel would result in a tool highly competitive with the state-of-the-art.

\section{Related Work}\label{sec:relatedwork}


\medskip\noindent{\em Non-termination proving.} A large number of techniques for proving non-termination consider {\em lasso-shaped} programs, \\
which consist of a finite prefix (or stem) followed by a single loop  without branching~\cite{GuptaHMRX08,LeikeH18}. Such techniques are suitable for being combined with termination provers~\cite{HarrisLNR10}. Many modern termination provers repeatedly generate traces which are then used to refine the termination argument in the form of a ranking function, either by employing safety provers~\cite{CookPR06} or by checking emptiness of automata~\cite{HeizmannHP14}. When refinement is not possible, a trace is treated like a lasso program and the prover would try to prove non-termination. However, lassos are not sufficient to detect {\em aperiodic} non-termina\-tion, whereas our approach handles it. Moreover, programs with nested loops typically contain infinitely many lassos which may lead to divergence, and such methods do not provide relative completeness guarantees.

TNT~\cite{GuptaHMRX08} proves non-termination by exhaustively searching for candidate lassos. For each lasso, it searches for a {\em recurrence set} (see Section~\ref{sec:certificate}) and this search is done via constraint solving. The method does not support non-determi\-nism. 

Closed recurrence sets (see Section~\ref{sec:certificate}) are a stronger notion than the recurrence sets, suited for proving non-termina\-tion of non-deterministic programs. The method for computing closed recurrent sets in~\cite{ChenCFNO14} was implemented in $\mathsf{T2}$ and it uses a safety prover to eliminate terminating paths iteratively until it finds a program under-approximation and a closed recurrence set in it. The method can detect aperiodic non-termination. However it is likely to diverge in the presence of many loops, as noted in~\cite{LarrazNORR14}.


The method in \cite{LarrazNORR14} was implemented in $\VeryMax$~\cite{BorrallerasBLOR17} and it searches for witnesses to non-termination in the form of quasi-invariants, which are sets of configurations that cannot be left once they are entered. Their method searches for a quasi-invariant in each strongly-connected subgraph of the program by using Max-SMT solving. Whenever a quasi-invariant is found, safety prover is used to check its reachability. The method relies on multiple calls to a safety prover and does not provide relative completeness guarantees.

$\AProVE$~\cite{GieslABEFFHOPSS17} proves non-termination of Java programs~\cite{BrockschmidtSOG11} with non-determinism. It uses constraint solving to find a recurrence set in a given loop, upon which it checks reachability of the loop. The key limitation of this approach is that for programs with nested loops for which the loop condition is not a loop invariant, it can only detect recurrence sets with a single variable valuation at the loop head.

An orthogonal approach to recurrence sets was presented in~\cite{LeikeH18}. It considers lasso-shaped programs with linear arithmetic and represents infinite runs as geometric series. Their method provides relative completeness guarantees for the case of deterministic lasso-shaped programs. It also supports non-determinism, but does not provide relative completeness guarantees. The method has been implemented as a non-termination prover for lasso traces in $\Ultimate$~\cite{ChenHLLTTZ18}.

The method in~\cite{UrbanGK16} tries to prove either termination or non-termination of programs with non-determinism by making multiple calls to a safety prover. For each loop, a termination argument is incrementally refined by using a safety prover to sample a terminating trace that violates the argument. Once such terminating traces cannot be found, a safety prover is again used to check the existence of non-terminating traces in the loop.

The work of \cite{GulwaniSV08} considers deterministic programs with linear integer arithmetic. They present a constraint solving-based method for finding the {\em weakest liberal precondition (w.l.p.)} of a fixed propositional predicate map template. They then propose a method for proving non-termination which computes the w.l.p.~for the postcondition "false", and then checks if it contains some initial configuration. While this approach is somewhat similar to Check 1, encoding and solving the {\em weakest} precondition constraints of a given template is computationally expensive and unnecessary for the purpose of proving non-termination. In Check 1, we do not impose such a strict condition. Moreover, initial diverging configurations are not sufficient to prove non-termination of non-deterministic programs. It is not immediately clear how one could use w.l.p.~calculus to find a diverging configuration within a loop, like in Example~\ref{ex:noninitialterm}.

The tool $\mathsf{Invel}$~\cite{VelroyenR08} proves non-termination of Java programs using constraint solving and heuristics to search for recurrence sets. It only supports deterministic programs. In \cite{LeQC15} a Hoare-style approach is developed to infer sufficient preconditions for terminating and non-terminating behavior of programs. As the paper itself mentions, the approach is not suitable for programs with non-determinism.

While all of the methods discussed above are restricted to programs with linear arithmetic, the following two methods also consider non-linear programs.

The tool $\mathsf{Anant}$~\cite{CookFNO14} proves non-termination of programs with non-linear arithmetic and heap-based operations. They define {\em live abstractions}, which over-approximate a program's transition relation while keeping it sound for proving non-termination. Their method then over-approximates non-linear assignments and heap-based commands with non-determini\-stic linear assignments using heuristics to obtain a live abstraction with only linear arithmetic. An approach similar to~\cite{GuptaHMRX08} but supporting non-determinism is then used, to exhaustively search for lasso traces and check if they are non-termina\-ting. The over-approximation heuristic they present is compatible with our approach and could be used to extend our method to support operations on the heap.

$\LoAT$~\cite{FrohnG19} proves non-termination of integer programs by using loop acceleration. If a loop cannot be proved to be non-terminating, the method tries to accelerate it in order to find paths to other potentially non-terminating loops.

\section{Conclusion and Future Work}\label{sec:conclusion}

We present a new approach for proving 
non-termination of polynomial programs with a relative completeness guarantee.
For programs that do not satisfy this guarantee, our approach requires safety provers.
An interesting direction of future work would be to consider approaches that can present stronger completeness guarantees. Another interesting direction would be to consider usefulness of the program reversal technique to studying other properties in programs.


\section*{Acknowledgements}
This research was partially supported by the ERC CoG 863818 (ForM-SMArt) and the Czech Science Foundation grant No. GJ19-15134Y.

\bibliography{cav20}

\clearpage
\appendix

\begin{center}
	{\Large Appendix}
\end{center}

\section{Omitted Proofs}\label{app:certificate}

\subsection{Sound and Complete Certificate for Non-termination}

\begin{lemma}
	Let $P$ be a non-terminating program and $\trsys$ its transition system. Then there exist a proper under-approximation $U$ of $\trsys$ and a closed recurrence set $\mathcal{C}$ in $U$.
\end{lemma}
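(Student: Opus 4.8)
The plan is to bootstrap from the sound-and-complete characterization via closed recurrence sets already recalled in this section, and then perform a small surgery on the under-approximation to make it \emph{proper} without disturbing the recurrence set. First I would invoke the cited chain of results: since $P$ is non-terminating, $\trsys$ contains a recurrence set~\cite{GuptaHMRX08}, which may be taken to be open, and hence~\cite[Theorems 1 and 2]{ChenCFNO14} there is an under-approximation $U$ of $\trsys$ together with a closed recurrence set $\mathcal{C}$ in $U$. The only gap to close is that such a $U$ need not be proper: restricting transition relations may leave some configuration that has a successor in $\trsys$ with no successor at all in $U$. My goal is therefore to enlarge $U$ to a proper under-approximation $U'$ while keeping $\mathcal{C}$ a closed recurrence set in it.

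The construction I would use restores, for every configuration lying \emph{outside} $\mathcal{C}$, all of its outgoing edges from $\trsys$, and leaves the behavior inside $\mathcal{C}$ exactly as in $U$. Concretely, for each transition $\tau=(l,l',\rho_\tau)$ of $\trsys$, let $\rho^U_\tau$ denote the relation that $U$ associates with $\tau$ (empty if $U$ drops $\tau$), and define $U'$ by the transition relations
\[ \rho^{U'}_\tau \;:=\; \rho^U_\tau \;\cup\; \{(\mathbf{x},\mathbf{x}')\in\rho_\tau \mid (l,\mathbf{x})\notin\mathcal{C}\}. \]
Since every added pair already lies in $\rho_\tau$, we have $\rho^{U'}_\tau\subseteq\rho_\tau$, so $U'$ is an under-approximation of $\trsys$.

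The verification then splits according to whether a configuration lies in $\mathcal{C}$. For properness, take any $(l,\mathbf{x})$ with a successor in $\trsys$: if $(l,\mathbf{x})\notin\mathcal{C}$ then each of its $\trsys$-successors is restored in $U'$, and if $(l,\mathbf{x})\in\mathcal{C}$ then the recurrence-set property of $\mathcal{C}$ in $U$ already supplies a successor in $\mathcal{C}$, which survives in $U'$ because $\rho^U_\tau\subseteq\rho^{U'}_\tau$. For $\mathcal{C}$ to remain a closed recurrence set, non-emptiness, absence of terminal configurations, and the existence of a successor in $\mathcal{C}$ are all inherited from $U$ (using $\rho^U_\tau\subseteq\rho^{U'}_\tau$ for the last), while containment of a configuration reachable in $U'$ follows since $\mathcal{C}$ is open and any initial configuration is reachable by the trivial one-element path. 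The crux is closedness: if $(l,\mathbf{x})\in\mathcal{C}$ and $(l',\mathbf{x}')$ is a $U'$-successor via $(\mathbf{x},\mathbf{x}')\in\rho^{U'}_\tau$, then the added set contributes nothing here---its pairs all have source outside $\mathcal{C}$---so $(\mathbf{x},\mathbf{x}')\in\rho^U_\tau$, and closedness of $\mathcal{C}$ in $U$ gives $(l',\mathbf{x}')\in\mathcal{C}$.

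I expect the main obstacle to be exactly this tension between the two requirements: making $U$ proper pushes toward reinstating edges, whereas closedness forbids any edge that escapes $\mathcal{C}$. The resolution above is to reinstate edges only out of configurations that are \emph{not} in $\mathcal{C}$, so that properness is recovered everywhere while closedness, which constrains only the successors of configurations \emph{inside} $\mathcal{C}$, is left untouched.
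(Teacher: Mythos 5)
Your proposal is correct and follows essentially the same route as the paper's own proof: the identical construction of enlarging the given under-approximation by restoring all original transitions out of configurations not in $\mathcal{C}$, followed by the same case split (inside vs.\ outside $\mathcal{C}$) to verify properness and that $\mathcal{C}$ remains a closed recurrence set. The only difference is cosmetic (your names $U$ and $U'$ are swapped relative to the paper).
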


\begin{proof}
	Since $P$ is non-terminating, $\trsys=(L,\vars,\locinit,\Theta_{init},\transitions)$ admits an under-approximation $U'=(L,\vars,\locinit,\Theta_{init},\transitions^{U'})$ and a closed recurrence set $\mathcal{C}$ in $U'$. Define the under-approximation $U=(L,\vars,\locinit,\Theta_{init},\transitions^U)$ of $\trsys$ by defining $\rho^U_\tau$ for each $\tau=(l,l',\rho_{\tau})\in\transitions$ as follows:
	\begin{equation*}
	\rho^U_{\tau} = \rho^{U'}_{\tau} \cup \{(\mathbf{x},\mathbf{x}')\mid (\mathbf{x},\mathbf{x}')\in \rho_{\tau}, (l,\mathbf{x})\not\in \mathcal{C}\}.
	\end{equation*}
	Thus we extend $\rho^{U'}_{\tau}$ in order to make each configuration outside $\mathcal{C}$ have the same set of successors as in $\transitions$.
	
	\smallskip\noindent We claim that the under-approximation $U$ of $\trsys$ is proper and that $\mathcal{C}$ is a closed recurrence set in $U$, proving the lemma. Indeed, by the construction of $U$ every configuration outside $\mathcal{C}$ which has a successor in $\trsys$ also has at least one successor in $U$. For a configuration in $\mathcal{C}$ this is immediate since its set of successors in $U$ is the same in $U'$ and $\mathcal{C}$ is a closed recurrence set in $U'$. Hence $U$ is proper. To see that $\mathcal{C}$ is a closed recurrence set in $U$, note again that every configuration in $\mathcal{C}$ has the same successor set in $U'$ and in $U$, thus it has at least one successor in $U$ which is in $\mathcal{C}$, and also all of its successors in $U$ are in $\mathcal{C}$. Therefore, as $\mathcal{C}$ contains an initial configuration in $\trsys$ and thus in $U$, the claim follows.
\end{proof}

\begin{theorem}[Soundness of our certificate]
	Let $P$ be a program and $\trsys$ its transition system. If there exists a $\BI$-certificate $(U,\BI,\Theta)$ in $\trsys$, then $P$ is non-terminating.
\end{theorem}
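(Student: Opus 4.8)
The plan is to directly exhibit an infinite, non-terminating run of $\trsys$, built in two stages: a finite prefix reaching a witness configuration extracted from the failure of $\BI$ to be an invariant, followed by an infinite suffix that is trapped inside $\neg\BI$. First I would invoke the third certificate condition: since $\BI$ is not an invariant in $\trsys$, there is a reachable configuration $\mathbf{c}=(l,\mathbf{x})$ with $\mathbf{x}\models \neg\BI(l)$, and reachability supplies a finite path from some initial configuration to $\mathbf{c}$ inside $\trsys$. The second preliminary step converts the backward hypothesis into a forward one: applying Theorem~\ref{lemma:inductiverreverse} with transition system $U$ and predicate map $\neg\BI$, the inductiveness of $\BI$ in $U^{r,\Theta}$ yields that $\neg\BI$ is inductive in $U$.

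With these two facts in hand, I would construct the suffix by induction. Starting from $\mathbf{c}\in\neg\BI$, suppose the current configuration $(l_i,\mathbf{x}_i)$ lies in $\neg\BI$. Every configuration of a program's transition system has a successor in $\trsys$ (by the terminal self-loop convention), so by \emph{properness} of $U$ it has at least one successor $(l_{i+1},\mathbf{x}_{i+1})$ in $U$; inductiveness of $\neg\BI$ in $U$ then forces $\mathbf{x}_{i+1}\models\neg\BI(l_{i+1})$. Iterating produces an infinite run of $U$ — hence of $\trsys$, since $U$ under-approximates $\trsys$ — all of whose configurations from $\mathbf{c}$ onward lie in $\neg\BI$. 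Splicing this onto the finite prefix gives the candidate non-terminating run of $\trsys$.

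It remains to argue that this run never reaches $\locterm$, which is the heart of the proof. The key observation is a duality: a reachable terminal valuation $\mathbf{y}\in\mathit{Reach}_{\trsys}(\locterm)$ satisfies $\mathbf{y}\models\Theta$ by the first certificate condition, so $(\locterm,\mathbf{y})$ is an \emph{initial} configuration of the reversed system $U^{r,\Theta}$; since $\BI$ is an invariant in $U^{r,\Theta}$ and initial configurations are trivially reachable there, this forces $\mathbf{y}\models\BI(\locterm)$. Thus no reachable terminal valuation can lie in $\neg\BI(\locterm)$. I would use this first to check that the witness $\mathbf{c}$ is non-terminal — otherwise $\mathbf{x}\models\neg\BI(\locterm)$ would contradict the duality — which in turn guarantees the prefix avoids $\locterm$, since a path stuck at the terminal self-loop could never end at a non-terminal configuration. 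The same duality rules out any terminal configuration in the suffix, as every suffix configuration is simultaneously reachable in $\trsys$ and contained in $\neg\BI$. Hence the run is non-terminating, and $P$ is non-terminating.

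The main obstacle I anticipate is exactly this interplay between forward reachability of $\locterm$ in $\trsys$ and the initial configurations of the reversed system $U^{r,\Theta}$: one must see that the condition $\Theta\supseteq\mathit{Reach}_{\trsys}(\locterm)$ is precisely what lets the invariant $\BI$ of the reversed system block every reachable terminal valuation, and that \emph{properness} (rather than an arbitrary under-approximation) is what prevents the suffix construction from getting stuck. The remaining bookkeeping — that the prefix is a run of $\trsys$, the suffix a run of $U$ and hence of $\trsys$, and that the two concatenate into a genuine infinite run — is routine.
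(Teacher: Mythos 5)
Your proposal is correct and follows essentially the same route as the paper's own proof: extract a reachable configuration $\mathbf{c}\in\neg\BI$ from non-invariance, use Theorem~\ref{lemma:inductiverreverse} to transfer inductiveness of $\BI$ in $U^{r,\Theta}$ to inductiveness of $\neg\BI$ in $U$, build the infinite suffix via properness of $U$, and exclude terminal configurations because $\Theta\supseteq\mathit{Reach}_{\trsys}(\locterm)$ together with invariance of $\BI$ in $U^{r,\Theta}$ forces every reachable terminal valuation into $\BI$. Your treatment of the terminal-location duality and of why the prefix avoids $\locterm$ is somewhat more explicit than the paper's, but it is the same argument.
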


\begin{proof}
	Consider a predicate map $\neg\BI$. Note that in a transition system defined by a program, every configuration has at least one successor. So as $U$ is proper, every configuration in $\neg\BI$ has at least one successor in $U$. On the other hand, by Theorem~\ref{lemma:inductiverreverse} we have that $\neg\BI$ is inductive in $U$ since $\BI$ is inductive in $U^{r,\Theta}$. Hence for every configuration in $\neg\BI$, all of its successors in $U$ are also in $\neg\BI$.
	
	\smallskip\noindent We also know that $\BI$ is not an invariant in $\mathcal{\trsys}$, so there exists a reachable configuration $\mathbf{c}$ in $\trsys$ which is contained in $\neg\BI$. We use it to construct a non-terminating execution. Pick any finite path $\mathbf{c}_0,\mathbf{c}_1,\dots,\mathbf{c}_k=\mathbf{c}$ in $\trsys$, which exists by reachability. Since terminal configurations do not have non-terminal successors, none of the configurations along this finite path is terminal. Then, starting from $\mathbf{c}$ we may inductively pick successors in $U$ which are in $\neg\BI$. This is possible from what we showed above. Moreover, none of the picked successors can be terminal all reachable terminal configurations are contained in $\BI$. Thus we obtain a non-terminating run, as wanted.
\end{proof}

\begin{theorem}[Complete characterization of non-termination]
	Let $P$ be a non-terminating program with transition system $\trsys$. Then $\trsys$ admits a proper under-approximation $U$ and a predicate map $\BI$ such that $\BI$ is an inductive backward invariant in the reversed transition system $U^{r,\mathbb{Z}^{|\vars|}}$, but not an invariant in $\trsys$.
\end{theorem}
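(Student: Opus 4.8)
The plan is to reuse the closed-recurrence-set machinery and the reversal duality of Theorem~\ref{lemma:inductiverreverse}. Since $P$ is non-terminating, Lemma~\ref{lemma:properun} supplies a proper under-approximation $U$ of $\trsys$ together with a closed recurrence set $\mathcal{C}$ in $U$. I would identify $\mathcal{C}$ with the predicate map $C$ given by $C(l)=\{\mathbf{x}\mid (l,\mathbf{x})\in\mathcal{C}\}$, and define the candidate $\BI:=\neg C$, i.e.~$\BI(l)=\neg C(l)$ for every location $l$. The three things left to verify are then: that $\BI$ is inductive in $U^{r,\mathbb{Z}^{|\vars|}}$, that $\BI$ is in fact an invariant there (so that it is a genuine \emph{backward} invariant), and that $\BI$ fails to be an invariant in the forward system $\trsys$.

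For inductiveness, I would note that a closed recurrence set is by definition inductive in $U$: every configuration in $\mathcal{C}$ has all of its $U$-successors inside $\mathcal{C}$. In predicate-map terms this is exactly the statement that $C$ is inductive in $U$. Theorem~\ref{lemma:inductiverreverse} then immediately gives that $\neg C=\BI$ is inductive in the reversed system $U^{r,\mathbb{Z}^{|\vars|}}$. To promote inductiveness to the invariant property, recall that the initial configurations of $U^{r,\mathbb{Z}^{|\vars|}}$ are exactly $\{(\locterm,\mathbf{x})\mid \mathbf{x}\in\mathbb{Z}^{|\vars|}\}$; since an inductive predicate map containing all initial configurations contains all reachable ones, it suffices to check $\BI(\locterm)=\mathbb{Z}^{|\vars|}$. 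This is where the terminal-configuration condition enters: a recurrence set contains no terminal configurations, so $C(\locterm)=\emptyset$ and hence $\BI(\locterm)=\neg\emptyset=\mathbb{Z}^{|\vars|}$, as required. Thus $\BI$ is an inductive backward invariant in $U^{r,\mathbb{Z}^{|\vars|}}$.

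Finally, to see that $\BI$ is not an invariant in $\trsys$, I would exhibit a reachable configuration of $\trsys$ lying outside $\BI$, i.e.~inside $\mathcal{C}$. Because $\mathcal{C}$ is an open recurrence set (via the open-recurrence-set reduction behind Lemma~\ref{lemma:properun}), it contains some initial configuration $(\locinit,\mathbf{x}_0)$ with $\mathbf{x}_0\models\Theta_{init}$; such a configuration is trivially reachable in $\trsys$ via the length-zero path, and it satisfies $\mathbf{x}_0\models C(\locinit)$, hence $\mathbf{x}_0\not\models\BI(\locinit)$. Therefore $\BI$ is violated at a reachable configuration and cannot be an invariant in $\trsys$. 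I expect the only delicate point to be the bookkeeping around the terminal location: one must invoke the ''no terminal configurations'' clause of the recurrence-set definition to get $\BI(\locterm)=\mathbb{Z}^{|\vars|}$ and thereby upgrade the purely inductive $\BI$ produced by Theorem~\ref{lemma:inductiverreverse} to a full backward invariant, rather than merely an inductive predicate map.
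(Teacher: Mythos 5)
Your proposal is correct and follows essentially the same route as the paper's own proof: it invokes Lemma~\ref{lemma:properun} to obtain the proper under-approximation $U$ and closed recurrence set $\mathcal{C}$, defines $\BI=\neg\mathcal{C}$, uses Theorem~\ref{lemma:inductiverreverse} to transfer inductiveness to $U^{r,\mathbb{Z}^{|\vars|}}$, uses the no-terminal-configurations clause to get $\BI(\locterm)=\mathbb{Z}^{|\vars|}$ (hence a genuine backward invariant), and uses the initial configuration contained in the (open/closed) recurrence set to show $\BI$ fails to be an invariant in $\trsys$. The only difference is cosmetic: you make explicit the step ``inductive $+$ contains all initial configurations of the reversed system $\Rightarrow$ invariant,'' which the paper leaves implicit.
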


\begin{proof}
	Since $P$ is non-terminating, from Lemma~\ref{lemma:properun} we know that $\trsys$ admits a proper under-approximation $U$ and a closed recurrence set $\mathcal{C}$ in $U$. For each location $l$ in $\trsys$, let $\mathcal{C}(l)=\{\mathbf{x}\mid (l,\mathbf{x})\in \mathcal{C}\}$. Define the predicate map $\BI$ as $\BI(l)=\neg \mathcal{C}(l)$ for each $l$. We claim that $\BI$ is an inductive backward invariant in $U^{r,\mathbb{Z}^{|\vars|}}$, but not an invariant in $\trsys$.
	
	\smallskip\noindent By definition of closed recurrence sets, $\BI$ contains all terminal configurations in $\trsys$, i.e.~initial configurations of $U^{r,\mathbb{Z}^{|\vars|}}$. On the other hand, for every configuration in $\mathcal{C}$ all of its successors in $U$ are also contained in $\mathcal{C}$. Thus, as a predicate map $\mathcal{C}$ is inductive, hence by Theorem~\ref{lemma:inductiverreverse} $\BI=\neg \mathcal{C}$ is inductive in $U^{r,\mathbb{Z}^{|\vars|}}$. This shows that $\BI$ is an inductive backward invariant in $U^{r,\mathbb{Z}^{|\vars|}}$. On the other hand, as a closed recurrence set $\mathcal{C}$ contains an initial configuration in $\trsys$, $\BI$ does not contain all reachable configurations and is thus not an invariant in $\trsys$. The claim of the theorem follows.
\end{proof}

\subsection{Soundness of the Algorithm for Non-termination proving}

\begin{theorem}[Soundness]
	If Algorithm~\ref{algo:nonterm} outputs ''Non-ter\-mination'' for some input program $P$, then $P$ is non-terminating.
\end{theorem}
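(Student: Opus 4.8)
The plan is to reduce both successful outcomes of the algorithm to the $\BI$-certificate of Section~\ref{sec:certificate} and then invoke its soundness, Theorem~\ref{thm:certsoundness}. The algorithm outputs ``Non-termination'' only in two situations: either $\Phi_1$ is feasible (Check~1), or $\Phi_2$ is feasible \emph{and} the subsequent safety-prover query discovers a reachable configuration in $\neg\BI$ (Check~2). I would handle these two cases separately, in each one exhibiting a proper under-approximation $U$, a predicate map $\BI$, and an assertion $\Theta$ that together form a valid $\BI$-certificate $(U,\BI,\Theta)$. The common starting observation, which supplies $U$ in both cases, is that for any resolution of non-determinism $R^{\NA}$ the restricted system $\trsys_{R^{\NA}}$ is a proper under-approximation of $\trsys$, as noted immediately after Definition~\ref{def:restriction}.

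For Check~1, I would set $\BI=\neg I$ and $\Theta=\mathbb{Z}^{|\vars|}$, where $I$ and the initial configuration $\mathbf{c}$ are the witnesses obtained from solving $\Phi_1$. The three certificate conditions then follow quickly. First, $\Theta=\mathbb{Z}^{|\vars|}$ trivially contains $\mathit{Reach}_{\trsys}(\locterm)$. Second, since $I$ is inductive in $\trsys_{R^{\NA}}$, Theorem~\ref{lemma:inductiverreverse} gives that $\neg I$ is inductive in $\trsys_{R^{\NA}}^{r,\Theta}$, and because $I(\locterm)=\emptyset$ we have $\BI(\locterm)=\mathbb{Z}^{|\vars|}\supseteq\Theta$, so $\BI$ covers the initial configurations of the reversed system and, being inductive, is an inductive backward invariant. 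Third, $\mathbf{c}\in I(\locinit)$ gives $\mathbf{c}\notin\BI(\locinit)$, and $\mathbf{c}$ is reachable because it is initial, so $\BI$ is not an invariant in $\trsys$. Thus $(\trsys_{R^{\NA}},\neg I,\mathbb{Z}^{|\vars|})$ is a $\BI$-certificate and non-termination follows from Theorem~\ref{thm:certsoundness}.

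For Check~2, I would take $U=\trsys_{R^{\NA}}$, the predicate map $\BI$ returned by $\Phi_2$, and $\Theta=\tilde{I}(\locterm)$. Since $\phi_{\tilde I}$ forces $\tilde I$ to be an inductive invariant for $\trsys$, every reachable configuration, in particular every reachable terminal one, lies in $\tilde I$, whence $\mathit{Reach}_{\trsys}(\locterm)\subseteq\tilde I(\locterm)=\Theta$, giving the first certificate condition. The second condition, that $\BI$ is an inductive backward invariant in $\trsys_{R^{\NA}}^{r,\Theta}$, is exactly what the constraint $\phi_{\BI,R^{\NA}}$ encodes. The crucial point is the third condition: feasibility of $\Phi_2$ only guarantees that $\BI$ fails to be inductive with respect to some transition, which by itself does \emph{not} make $\BI$ fail to be an invariant. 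I would stress that the algorithm reports ``Non-termination'' in Check~2 \emph{only} after the safety prover has certified that some $(l,\mathbf{x})$ with $\mathbf{x}\models\neg\BI(l)$ is reachable in $\trsys$; this certification is precisely what supplies the missing ``$\BI$ is not an invariant in $\trsys$'' condition. With all three conditions established, $(\trsys_{R^{\NA}},\BI,\tilde I(\locterm))$ is a $\BI$-certificate and Theorem~\ref{thm:certsoundness} again yields non-termination.

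The main obstacle to get right is the logical role of the non-inductiveness clause $\bigvee_{\tau}\phi_{\tau}$ versus the safety-prover check in Check~2. That clause is only a heuristic that steers the search toward a $\BI$ whose complement is likely reachable, and it must not be used as the justification for the certificate's third condition. The soundness argument has to hinge entirely on the safety prover's positive answer, so the theorem is genuinely a statement about the algorithm's control flow, namely that it reports non-termination in Check~2 only on a successful reachability query, rather than a statement about the constraint system $\Phi_2$ alone.
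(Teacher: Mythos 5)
Your proposal is correct and follows essentially the same route as the paper's own proof: both cases are reduced to exhibiting the $\BI$-certificates $(\trsys_{R^{\NA}},\neg I,\mathbb{Z}^{|\vars|})$ and $(\trsys_{R^{\NA}},\BI,\tilde I(\locterm))$ respectively, using properness of $\trsys_{R^{\NA}}$, Theorem~\ref{lemma:inductiverreverse} for Check~1, and the safety prover's positive answer (not the non-inductiveness clause) for the third certificate condition in Check~2, before invoking Theorem~\ref{thm:certsoundness}. Your closing remark on the purely heuristic role of $\bigvee_{\tau}\phi_{\tau}$ matches the paper's own discussion in Section~\ref{sec:overview} and is exactly the right point to emphasize.
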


\begin{proof}
	If the algorithm outputs ''Non-termination'' for an input program $P$, then it was either able to show that $\Phi_1$ is feasible for $P$, or that $\Phi_2$ is feasible for $P$ with the subsequent safety check being successful.
	
	Suppose that the algorithm showed that $\Phi_1$ is feasible and found a resolution of non-determinism $R^{\NA}$, an initial configuration $\mathbf{c}$ in $\trsys$, and a type-$(c,d)$ propositional predicate map $I$ satisfying properties in Check 1 of the algorithm. We claim that $(\trsys_{R^{\NA}},\neg I,\mathbb{Z}^{|\vars|})$ is a $\BI$-certificate for $P$ and thus, by Theorem~\ref{thm:certsoundness}, $P$ is non-terminating. By Definition~\ref{def:restriction}, $R^{\NA}$ defines a proper under-approximation $\trsys_{R^{\NA}}$ of $\trsys$. On the other hand, $I$ is inductive for $\trsys_{R^{\NA}}$, so by Theorem~\ref{lemma:inductiverreverse}, $\neg I$ is inductive for $\trsys_{R^{\NA}}^{r,\mathbb{Z}^{|\vars|}}$. Moreover, $I(\locterm)=\emptyset$, so $\neg I(\locterm)=\mathbb{Z}^{|\vars|}$ contains all terminal configurations and $\neg I$ is an inductive backward invariant for $\trsys_{R^{NA}}^{r,\mathbb{Z}^{|\vars|}}$. Finally, $\neg I$ does not contain the initial configuration $\mathbf{c}$ in $\trsys$ and is thus not an invariant for $\trsys$. 
	
	Suppose now that $\Phi_2$ was shown to be feasible and that the subsequent safety check was successful. Then the algorithm had to find a resolution of non-determinism $R^{\NA}$, a type-$(c,1)$ predicate map $\tilde{I}$, a type-$(c,d)$ predicate map $\BI$, and a transition $\tau$ in $\trsys$ satisfying constraints in Check 2 of the algorithm. We claim that $(\trsys_{R^{\NA}},\BI,\tilde{I}(\locterm))$ is a $\BI$-certificate for $P$ and thus, by Theorem~\ref{thm:certsoundness}, $P$ is non-terminating. Again, by Definition~\ref{def:restriction}, $R^{\NA}$ defines a proper under-approxi\-mation $\trsys_{R^{\NA}}$ of $\trsys$. $\BI$ is an inductive backward invariant for $\trsys_{R^{\NA}}^{r,\tilde{I}(\locterm)}$ and $\tilde{I}(\locterm)$ contains all terminal configurations in $\trsys$, as $\tilde{I}$ is an invariant for $\trsys$. Finally, since the safety check was successful there exists a reachable configuration in $\trsys$ contained in $\neg\BI$, which shows that $\BI$ is not an invariant for $\trsys$.
\end{proof}

\lstset{language=affprob}
\lstset{tabsize=2}
\newsavebox{\exapppppp}
\begin{lrbox}{\exapppppp}
	\centering
	\begin{lstlisting}[mathescape]
	$l_0$:	while $x\geq 1$ do
	$l_1$:		$y := 10 \cdot x$
	$l_2$:		while $x\leq y$ do
	$l_3$:			$x := x+1$
	    od
	   od
	\end{lstlisting}
\end{lrbox}

\begin{figure}[t]
	\centering
	\usebox{\exapppppp}
	\caption{Example illustrating aperiodic non-termination. }
	\label{fig:aperiodic}
\end{figure}

\section{Missing Table for the Experiments}\label{app:exp}

\begin{center}
	\begin{table}[hb]
		\centering
		\caption{Comparison of configurations based on the template size for predicate maps. A cell in the table corresponding to $(C=i,D=j)$ contains the number of benchmarks that were proved to be non-terminating by a configuration using template size $(c,d)$ with $c \leq i$ and $d\leq j$.}
		\begin{tabular}{|c||c|c|c|c|c|}
			\hline
			& $D=1$ & $D=2$ & $D=3$ & $D=4$ & $D=5$ \\\hline\hline
			$C=1$ & 58 & 76 & 77 & 78 & 78 \\\hline
			$C=2$ & 90 & 97 & 97 & 97 & 97 \\\hline
			$C=3$ & 102 & 107 & 107 & 107 & 107 \\\hline
			$C=4$ & 103 & 107 & 107 & 107 & 107 \\\hline
			$C=5$ & 103 & 107 & 107 & 107 & 107 \\\hline
		\end{tabular}
		\label{tab:exp4}
	\end{table}
\end{center}

\section{Example for Aperiodic Non-termination}\label{app:aperiodic}

\begin{example}\label{ex:aperiodic}
Consider the program in Fig.~\ref{fig:aperiodic}. Note that non-determinism in this program is implicit and appears in assigning the initial variable valuation. Moreover, every initial configuration defines a unique execution which is terminating if and only if the initial value of $x$ is non-positive. \\
We claim that every non-terminating execution in this program is aperiodic. Consider an execution starting in an initial configuration $(\loc_0,x_1,y_1)$ with $x_1\geq 1$. First, note that the inner program loop is terminating thus the execution will execute the outer loop infinitely many times. The $i$-th iteration of the outer loop in this execution is of the form $\loc_0,\loc_1,(\loc_2,\loc_3)^{(9x_i)}$ where $(\loc_0,x_i,y_i)$ the configuration in this execution upon starting the $i$-th iteration of the outer loop. A simple inductive argument shows that $x_i=10^{i-1}$ for each $i\in\mathbb{N}$. This proves the claim that each non-terminating execution is indeed aperiodic. \\
We now show that Check 1 of our algorithm can prove that this program is non-terminating. Since there are no non-deterministic assignments, the resolution of non-determinism $R^{\NA}$ is trivial. Consider the initial configuration $\mathbf{c}=(\loc_0,1,1)$ and a propositional predicate map $I$ defined via $I(\loc) = (x\geq 1)$ if $\loc\in\{\loc_0,\loc_1,\loc_2,\loc_3\}$, and $I(\locterm)=\emptyset$. Then $R^{\NA}$, $I$ and $\mathbf{c}$ satisfy all the properties in Check 1 of our algorithm, and our algorithm can prove non-termination of the program in Fig.~\ref{fig:aperiodic}.
\end{example}

\pagebreak

\section{Reversed Transition System for the Program in Example~\ref{ex:noninitialterm}}\label{app:reversed}

\begin{figure}[hb]
	\begin{center}
	\centering
	\begin{tikzpicture}
	\node[ran] (while) at (0,0)  {$\loc_0$};
	\node[ran, right = 3.6cm of while] (term) {$ \locterm $};
	\node[ran, below = 1.5 cm of while] (assignu) {$ \loc_1 $};
	\node[ran, below = 1.5 cm of assignu] (assignb) {$ \loc_2 $};
	\node[ran, below left = 2.4 cm of assignb] (assignbmin) {$ \loc_3 $};
	\node[ran, below = 3 cm of assignb] (assignbzer) {$ \loc_4 $};
	\node[ran, below right = 2.4 cm of assignb] (assignbplu) {$ \loc_5 $};
	\node[ran, below = 3 cm of assignbzer] (assignn) {$ \loc_6 $};
	\node[ran, below = 1.5 cm of assignn] (if) {$ \loc_7 $};
	\node[ran, below = 1.5 cm of if] (while2) {$ \loc_8 $};
	\node[ran, left = 2.6 cm of while2] (skip) {$ \loc_9 $};
	
	\draw[tran] (term) to node[font=\scriptsize,draw, fill=white, 
	rectangle,pos=0.5] {$(b!=0 \lor n\geq 100)\land I_{n,u,b}$} (while);
	\draw[tran] (if) to node[font=\scriptsize,draw, fill=white, 
	rectangle,pos=0.5] {$n'=n-1\land I_{u,b}$} (assignn);
	\draw[tran] (assignb) to node[font=\scriptsize,draw, fill=white, 
	rectangle,pos=0.5] {$u=1\land I_{n,b}$} (assignu);
	\draw[tran] (assignbmin) to node[font=\scriptsize,draw, fill=white, 
	rectangle,pos=0.5] {$u\leq -1\land I_{n,u,b}$} (assignb);
	\draw[tran] (assignbzer) to node[font=\scriptsize,draw, fill=white, 
	rectangle,pos=0.5] {$u=0\land I_{n,u,b}$} (assignb);
	\draw[tran] (assignbplu) to node[font=\scriptsize,draw, fill=white, 
	rectangle,pos=0.5] {$u\geq 1\land I_{n,u,b}$} (assignb);
	\draw[tran] (assignn) to node[font=\scriptsize,draw, fill=white, 
	rectangle,pos=0.5] {$b=1\land I_{n,u}$} (assignbplu);
	\draw[tran] (assignn) to node[font=\scriptsize,draw, fill=white, 
	rectangle,pos=0.5] {$b=0\land I_{n,u}$} (assignbzer);
	\draw[tran] (assignn) to node[font=\scriptsize,draw, fill=white, 
	rectangle,pos=0.5] {$b=-1\land I_{n,u}$} (assignbmin);
	\draw[tran] (assignu) to node[font=\scriptsize,draw, fill=white, 
	rectangle,pos=0.5] {$(b=0\land x\leq 99)\land I_{n,u,b}$} (while);
	\draw[tran] (if) to node[font=\scriptsize,draw, fill=white, 
	rectangle,pos=0.5] {$(b\geq 1\land x=100)\land I_{n,u,b}$} (while2);
	\draw[tran] (while2) to node[font=\scriptsize,draw, fill=white, 
	rectangle,pos=0.5] {$I_{n,u,b}$} (skip);
	
	\node[left = 4cm of while, circle, minimum size = 3mm] (dum) {};
	\draw[tran, rounded corners] (while) -- (dum.east) -- node[font=\scriptsize,draw, fill=white, 
	rectangle,pos=0.5] {$(x!=100 \lor b\leq 0)\land I_{n,u,b}$} (dum.east|-if) -- (if);
	
	\node[below = 0.8cm of skip, circle, minimum size = 3mm] (dum2) {};
	\node[below = 0.8cm of while2, circle, minimum size = 3mm] (dum3) {};
	\draw[tran, rounded corners] (skip) -- (dum2.north) -- node[font=\scriptsize,draw, fill=white, 
	rectangle,pos=0.5] {$I_{n,u,b}$} (dum3.north) -- (while2);
	\end{tikzpicture}
	\caption{Reversed transition system of the program in Fig.~\ref{fig:noninitialterm} with the resolution of non-determinism that assigns the constant expression $1$ to the non-deterministic assignment of the variable $u$. For readability, we use $I_{n,u,b}$ to denote $n'=n\land u'=u\land b'=b$, $I_{u,b}$ to denote $u'=u\land b'=b$, etc. }
	\vspace{-1.5em}
	\label{fig:revnoninitialterm}
	\end{center}
\end{figure}
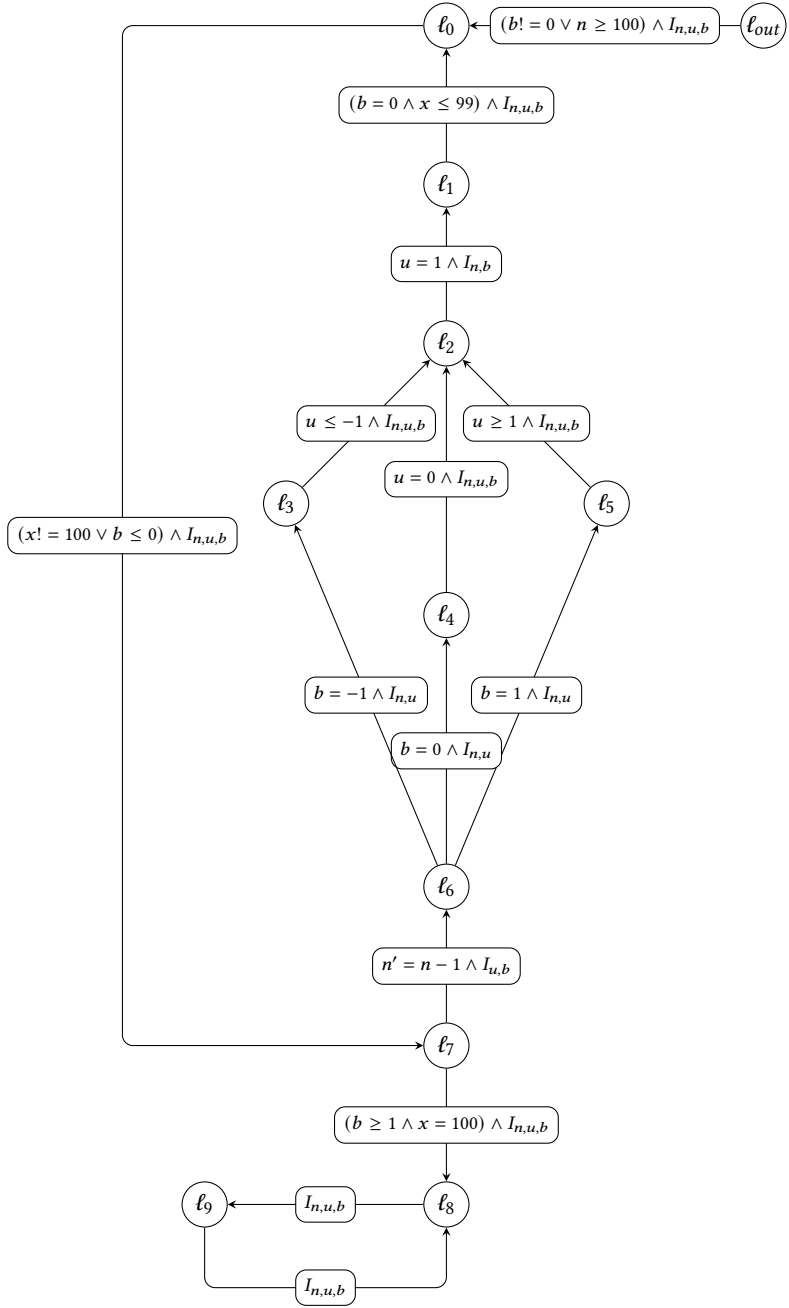

\end{document}